\newcommand{\R}{{\mathbb R}}
\newcommand{\F}{\mathbb F}
\newcommand{\E}{\mathcal E}
\renewcommand{\P}{\mathbb P}
\newcommand\numberthis{\addtocounter{equation}{1}\tag{\theequation}}
\begin{document}

\title{Distribution of the absolute indicator of random Boolean functions
\thanks{This work was partially done while the first author was funded by the Instituto Nacional de Matematica Pura e Aplicada (IMPA), Rio de Janeiro, RJ - Brazil and the foundation of Coordena\c{c}ao de Aperfeic{c}oamento de Pessoal de Nivel Superior (Capes) of the Brazilian ministry of education
}}


\author{Florian Caullery      \and  Fran\c cois Rodier
}


\institute{F. Caullery  \at
              DarkMatter LLC, Abu Dhabi, United Arab Emirates \\
              \email{florian.caullery@darkmatter.ae}           
           \and
           F. Rodier \at
              Institut Math\'ematiques de Marseille, France\\
              \email{francois.rodier@univ-amu.fr} 
}

\date{Received: date / Accepted: date}

\maketitle

\begin{abstract}
The absolute indicator is one of the measures used to determine the resistance offered by a Boolean function when used in the design of a symmetric cryptosystem. It was proposed along with the sum of square indicator to evaluate the quality of the diffusion property of block ciphers and hash functions. While the behaviour of the sum of square of random Boolean functions was already known, what remained was the study of the comportment of the absolute indicator of random Boolean functions.
As an application, we show that the absolute indicator can distinguish a nonrandom binary sequence from a random one.
\keywords{absolute indicator \and random Boolean function \and autocorrelation \and  nonlinearity \and  finite field.}
 \subclass{94A60  \and 11T712 \and 14G50}
\end{abstract}

\section{Introduction}
\label{intro}
Let $\F_{2}^n$ be the $n$-dimensional vector space over the finite field of $2$ elements. The Boolean functions are the functions from $\F_{2}^n$ to $\F_2$. They are used in cryptosystems as they are a convenient way to describe S-Boxes. We refer to \cite{Carlet} for a global survey on the cryptographic applications of Boolean functions. There exist several ways to measure the resistance offered by a Boolean function against specific cryptanalysis. Among them, we should mention the \textit{nonlinearity}, which is the Hamming distance of the function to the set of affine functions, the \textit{absolute indicator} and the \textit{sum of squares}, which are usually grouped into the term of Global Avalanche Criterion. The two latter were introduced in \cite{GAC} to measure the capacity of a Boolean function to ensure the propagation property of a cryptosystem. The relations between the absolute indicator and the other cryptographic measures have been extensively studied as well as the distribution of the absolute indicator of certain specific classes of Boolean functions (sometimes under the name of auto-correlation value), see for example \cite{Zhou2013335, 5339812, canteaut2000propagation, Tarannikov2001, Zheng2001}.

Another possible use of these measures was proposed by one of the the authors during a workshop bringing in industrial representatives and academics (see the online report in \cite{SEME}). The problem set out was to determine if a short binary sequence could be pseudo-random. We proposed to see a binary sequence of length $2^n$ as the truth table of a Boolean function, compute its nonlinearity and absolute indicator and compare it to the expected values of random Boolean functions. The idea came from the fact that it was proved by Schmidt in \cite{Schmidt2015}, finalising the work of Rodier \cite{RodierNBF}, Dib \cite{Dib2010, Dib2014} and Lytsin and Shpunt \cite{Shpunt}, that the nonlinearity of random Boolean functions is concentrated around its expected value. Also, the same kind of result exists for the fourth moment of the nonlinearity of random Boolean functions, which is actually the sum of squares  (see \cite{Rodier_onthe}). 

However, there did not exist a study of the distribution of the absolute indicator of random Boolean functions, we fill the gap with our result. 
The difficulty of our case arises from the fact that we are not dealing only with independent random variables. Indeed, by writing the truth table of a Boolean function as a binary sequence, one can see the absolute indicator as the correlation of order 2 between the sequence and its circular shift (or rotation). Hence, we cannot straightforwardly apply estimates on sums of independent variables. We overcome the problem by carefully analysing the dependencies between the random variables. Also, our techniques allow us to keep the proofs simple and only based on combinatorics. 
As an example, we show that  a short binary sequence would be detected to be non random with respect to the absolute indicator  while it would pass the test with nonlinearity.

We begin by defining the absolute indicator of a Boolean function and set the formal frame for the study of the distribution of the absolute indicator.\\

\begin{definition}
Let $B_n=\{f_n:\F_2^n \rightarrow \F_2\}$ be the set of the Boolean functions of $n$ variables and let $f_n \in B_n$. For all $u \in \F_2^n$, write
\[
\Delta_{f_n}(u) := \sum_{x \in \F_2^n} (-1)^{f_n(x)+f_n(x+u)}.
\]
The absolute indicator of $f_n$ is defined by
\[
\Delta(f_n) := \max_{u \in \F_2^n - \{0\}}|\Delta_{f_n}(u)|.
\]
\end{definition}


Our goal is to show the following theorem on the distribution of the absolute indicator of random Boolean functions.
We denote by $\E(X)$ the expectation of a random variable $X$ and by $f_n : \F_2^n \rightarrow \F_2$ a random Boolean function running over the set $B_n$ provided with equiprobability.

\begin{theorem}
The expectation of the absolute indicator has the following limit:
\[
\frac{\E[\Delta(f_n)]}{\sqrt{2^n\log2^n}}\rightarrow 2
\]
as $n\rightarrow\infty$. Moreover,
\[
\P\bigg[\Big|\frac{\Delta(f_n)}{2 \sqrt{2^n \log 2^n}} - 1\Big| > \epsilon \bigg] \rightarrow 0
\]
for all $\epsilon > 0$ as $n\rightarrow\infty$.
\end{theorem}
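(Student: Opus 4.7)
The plan is to reduce $\Delta(f_n)=\max_{u\neq 0}|\Delta_{f_n}(u)|$ to a maximum over $2^n-1$ sums of independent Rademacher variables, then handle the upper direction by a union bound and the lower direction by a second-moment argument. The structural input is that, for fixed $u\neq 0$, grouping $\F_2^n$ into the $2^{n-1}$ cosets of $\langle u\rangle$ gives
\[
\Delta_{f_n}(u)=2\sum_{\{x,x+u\}}(-1)^{f_n(x)+f_n(x+u)},
\]
so $\Delta_{f_n}(u)/2$ is a sum of $2^{n-1}$ independent Rademacher variables. Hoeffding's inequality yields $\P\bigl[|\Delta_{f_n}(u)|>t\bigr]\le 2\exp(-t^2/2^{n+2})$, and at $t_+=2(1+\epsilon)\sqrt{2^n\log 2^n}$ a union bound over the $2^n-1$ nonzero $u$ produces $\P[\Delta(f_n)>t_+]\to 0$ exponentially. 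Integrating the same tail bound from $t_+$ to $\infty$ inside $\E[\Delta(f_n)]=\int_0^\infty\P[\Delta(f_n)>t]\,dt$ contributes only $o(t_+)$, so the expectation satisfies $\limsup_n\E[\Delta(f_n)]/\sqrt{2^n\log 2^n}\le 2$.

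For the lower direction I would set $t_-=2(1-\epsilon)\sqrt{2^n\log 2^n}$ and count $N=\#\{u\neq 0:|\Delta_{f_n}(u)|>t_-\}$. A de~Moivre--Laplace estimate for $\Delta_{f_n}(u)/2$ gives $\P[|\Delta_{f_n}(u)|>t_-]\gtrsim 2^{-n(1-\epsilon)^2}/\sqrt{n}$, so $\E[N]\to\infty$, and Paley--Zygmund reduces matters to bounding the cross terms $\P[|\Delta_{f_n}(u)|>t_-,\,|\Delta_{f_n}(v)|>t_-]$ for distinct nonzero $u,v$. By $GL_n(\F_2)$-invariance a single such pair suffices. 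Partitioning $\F_2^n$ into the $2^{n-2}$ cosets of $H=\{0,u,v,u+v\}$ and writing $\alpha,\beta,\gamma,\delta\in\{\pm1\}$ for the signs of the Boolean function on a coset, the contribution to $(\Delta_{f_n}(u),\Delta_{f_n}(v))$ from that coset is $(2(\alpha\beta+\gamma\delta),\,2(\alpha\gamma+\beta\delta))$, and the elementary identity
\[
(\alpha\beta+\gamma\delta)^2=(\alpha\gamma+\beta\delta)^2=2(1+\alpha\beta\gamma\delta)
\]
shows that the two coset contributions vanish simultaneously with probability $1/2$ and are otherwise independent uniform on $\{\pm2\}$. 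Conditioning on the tally $K$ of cosets with nonzero contribution, which is $\mathrm{Bin}(2^{n-2},1/2)$ and concentrates tightly around $2^{n-3}$, makes $\Delta_{f_n}(u)$ and $\Delta_{f_n}(v)$ conditionally independent Rademacher walks of common length $K$, and a short perturbation estimate then yields $\P[|\Delta_{f_n}(u)|>t_-,\,|\Delta_{f_n}(v)|>t_-]=(1+o(1))\P[|\Delta_{f_n}(u)|>t_-]^2$. Hence $\mathrm{Var}(N)=o(\E[N]^2)$, Chebyshev gives $\P[\Delta(f_n)>t_-]=\P[N>0]\to 1$, and this delivers both the matching lower bound on $\E[\Delta(f_n)]$ (via $\E[\Delta(f_n)]\ge t_-\P[\Delta(f_n)>t_-]$) and the remaining half of the concentration statement.

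The main obstacle is exactly the joint step: at the coset level the identity $U_i^2=V_i^2$ means $\Delta_{f_n}(u)$ and $\Delta_{f_n}(v)$ are uncorrelated yet strongly dependent (their zero sets coincide), so bare orthogonality cannot drive the second-moment argument. Recognising the common nonzero support and conditioning on its size $K$ is what uncovers the underlying conditional independence, and this is the combinatorial ingredient signalled in the introduction; everything else (Hoeffding, union bound, local CLT for the binomial, Paley--Zygmund) is standard once that observation is in hand.
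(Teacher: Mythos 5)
Your argument is correct, but it takes a genuinely different route from the paper's on both halves. For the lower direction the paper works at the exact threshold $\lambda_l=2\sqrt{l\log l}$ (where the expected number of exceedances does not diverge): it bounds the pair probability $\P\left[|\Delta_{f_n}(u)|\geq\lambda_l \cap |\Delta_{f_n}(v)|\geq\lambda_l\right]$ by $4l^{-2}$ via a combinatorial recursion on the mixed moments $S_{r,s}$ followed by Markov's inequality, applies Bonferroni over a sparse set $W$ of size $\lceil l/\log l\rceil$ to get only $\P[\Delta(f_n)\geq\lambda_l]\geq\tfrac{1}{10}(\log l)^{-3/2}$, and then needs McDiarmid's bounded-differences inequality to convert this weak probability bound into a lower bound on $\E[\Delta(f_n)]$; the same McDiarmid concentration also carries the upper bound on the expectation and the in-probability statement. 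You instead work at the sub-threshold $t_-=2(1-\epsilon)\sqrt{l\log l}$, where $\E[N]\to\infty$, and replace the moment recursion by an exact structural fact: on each coset of $\{0,u,v,u+v\}$ the two contributions vanish together and are otherwise independent uniform on $\{\pm2\}^2$ (your identity $(\alpha\beta+\gamma\delta)^2=(\alpha\gamma+\beta\delta)^2=2(1+\alpha\beta\gamma\delta)$ is right, and distinct nonzero $u,v$ are automatically independent over $\F_2$, so the coset decomposition is legitimate), hence conditionally on the number $K\sim\mathrm{Bin}(2^{n-2},1/2)$ of active cosets the two autocorrelations are independent walks of length $K$; concentration of $K$ factorizes the pair probability up to $1+o(1)$, Paley--Zygmund gives $\P[\Delta(f_n)>t_-]\to1$ directly, and the expectation limit follows from $\E[\Delta(f_n)]\geq t_-\P[\Delta(f_n)>t_-]$ together with tail integration of the Hoeffding/union bound (which indeed contributes $o(\sqrt{l\log l})$), so no martingale inequality is needed at all. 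What your route buys is a cleaner treatment of the dependence (exact conditional independence instead of moment estimates) and the stronger conclusion $\P[\Delta(f_n)>t_-]\to1$; what the paper's route buys is concentration around the mean via McDiarmid, which it reuses in Section 5 for almost-sure convergence, and control at the exact threshold $2\sqrt{l\log l}$. The one step you should write out in full is the ``short perturbation estimate'': you need a two-sided moderate-deviation estimate for the conditional tail, uniform over a window $|K-2^{n-3}|\leq\omega_n 2^{n/2}$ (the paper's Lemma 2, Cram\'er's theorem, applies since $t_-/(4\sqrt{K})=O(\sqrt n)=o(K^{1/6})$), plus the trivial bound by $1$ outside the window against the exponentially small binomial tail, to justify that the second moment over $K$ equals $(1+o(1))$ times the square of the first; with that made explicit the proof is complete.
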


The strategy to prove this result is based on ideas developed in \cite{KaiPSL} and in \cite{Rodier_onthe} with suitable alteration and on an idea in \cite{Alon2010}. The main difference in our case is that the estimation of the expected value of the absolute indicator involves dealing with non-independent random variables. By separating the dependent and independent parts in the expectation, we are able to apply classical results from martingales theory and then derive the best estimation possible. 
To sum up, we first prove that, for all $\epsilon > 0$,
\[
\P\bigg[\frac{\E[\Delta(f_n)]}{\sqrt{2^n\log2^n}}>2+\epsilon\bigg] \rightarrow 0
\]
as $n\rightarrow \infty$ and then show that, for all $\delta >0$, the set 
\[
N(\delta) = \bigg\{n > 1 : \frac{\E[\Delta(f_n)]}{\sqrt{2^n\log2^n}}< 2 - \delta \bigg\}
\]
is finite.
Moreover, in the last section, we prove that the absolute indicator of a random Boolean function converges almost surely towards $2\sqrt{2^n\log2^n}$. 

This article is an expansion of a paper which was presented at WCC17 \cite{cr}. \\

We begin with preliminary lemmata which shall be used in the fourth section to prove Theorem 1.

\section{Preliminary lemmata}

From now on we set $l = 2^n$. 
\begin{lemma}\label{MercerLike}
For all $\epsilon > 0$, as $n \rightarrow \infty$,
\[
\P\bigg[\frac{\Delta(f_n)}{\sqrt{l\log l}}>2+\epsilon\bigg] \rightarrow 0.
\]
\end{lemma}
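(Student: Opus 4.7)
The plan is to use a union bound over $u \in \F_2^n \setminus \{0\}$ combined with the crucial observation that for each fixed $u \neq 0$, the sum $\Delta_{f_n}(u)$ can be rewritten as a sum of genuinely independent $\pm 1$ variables. The dependencies that the introduction warns about are dependencies \emph{across} different values of $u$ (which will matter for the matching lower bound, Lemma to follow), not within a single $u$.

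First I would exploit the pairing induced by $u$. Since $u \neq 0$, the involution $x \mapsto x+u$ on $\F_2^n$ has no fixed points and partitions $\F_2^n$ into $l/2 = 2^{n-1}$ pairs $P = \{x,x+u\}$. For each pair the two summands $(-1)^{f_n(x)+f_n(x+u)}$ and $(-1)^{f_n(x+u)+f_n(x)}$ are equal, so
\[
\Delta_{f_n}(u) \;=\; 2\sum_{P} Y_P^{(u)}, \qquad Y_P^{(u)} := (-1)^{f_n(x)+f_n(x+u)},
\]
where the $l/2$ random variables $Y_P^{(u)}$ depend on disjoint pairs of values of $f_n$ and are therefore i.i.d.\ uniform on $\{-1,+1\}$ (here I use the equiprobability of $f_n \in B_n$).

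Next I would apply a standard Chernoff/Hoeffding bound to this sum of $l/2$ Rademacher variables: writing $T_\epsilon = (2+\epsilon)\sqrt{l \log l}$, the event $|\Delta_{f_n}(u)| > T_\epsilon$ is the event that a Rademacher sum of $l/2$ terms exceeds $T_\epsilon/2$ in absolute value, and Hoeffding's inequality then gives
\[
\P\bigl[|\Delta_{f_n}(u)| > T_\epsilon\bigr] \;\le\; 2\exp\!\Bigl(-\tfrac{(T_\epsilon/2)^2}{l/2}\Bigr) \;=\; 2\exp\!\bigl(-(1+\epsilon/2)^2 \log l\bigr) \;=\; 2\,l^{-(1+\epsilon/2)^2}.
\]
A union bound over the $l-1$ nonzero values of $u$ then yields
\[
\P\bigl[\Delta(f_n) > T_\epsilon\bigr] \;\le\; 2(l-1)\,l^{-(1+\epsilon/2)^2} \;\le\; 2\,l^{\,1-(1+\epsilon/2)^2},
\]
and since $(1+\epsilon/2)^2 > 1$ the right-hand side tends to $0$ as $l = 2^n \to \infty$, proving the lemma.

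The main thing to be careful about is the bookkeeping of constants: the factor $2$ picked up from the pairing $\Delta_{f_n}(u)=2\sum_P Y_P^{(u)}$ must be propagated through the Hoeffding exponent so that the exponent $(1+\epsilon/2)^2$ is strictly greater than $1$, which is exactly what beats the union-bound loss of a factor $l$. No martingale machinery is needed for this half; the subtler, non-independent analysis alluded to in the introduction is reserved for the matching lower bound on $\E[\Delta(f_n)]$.
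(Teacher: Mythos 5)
Your proof is correct and follows essentially the same route as the paper: pair the points via the fixed-point-free involution $x\mapsto x+u$ to write $\Delta_{f_n}(u)$ as twice a sum of $l/2$ independent $\pm1$ variables, apply a Chernoff/Hoeffding bound, and finish with a union bound over the $l-1$ nonzero shifts. One bookkeeping slip: Hoeffding for $k=l/2$ Rademacher variables gives the exponent $(T_\epsilon/2)^2/(2k)=(T_\epsilon/2)^2/l$, not $(T_\epsilon/2)^2/(l/2)$; your final bound $2\,l^{-(1+\epsilon/2)^2}$ is exactly what the correct exponent yields, so the conclusion stands unchanged.
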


\begin{proof}

Write $\mu_l = (2+\epsilon) \sqrt{l \log l}$. The union bound gives:
\begin{align*}
\P(\Delta(f_n)>\mu_l ) & \leq  \sum_{u \in \F_2^n-\{0\}} \P(|\Delta_{f_n}(u)|>\mu_l).
\end{align*}
Now note the trivial fact that $f_n(x)+ f_n(x+u) = f_n(x+u)+f_n(x+u+u)$. Hence, choosing one subset $C_u \subset \F_2^n$ of maximal cardinality such that if $x \in C_u$, then $x+u \not \in C_u$, we can write
\[
\Delta_{f_n}(u) = 2 \sum_{x \in C_u} X_{x,u},
\]
where $X_{x,u} = (-1)^{f_n(x)+ f_n(x+u)}$. Since $f_n$ is drawn at random,  we know from proposition 1.1 of \cite{Mercer} that the $X_{x,u}$'s are independent random variables equally likely to take the value $-1$ or $+1$. We can now apply Corollary A.1.2 of \cite{Alon2010} with  $k = \sharp C_u = l/2$ to obtain
\begin{align*}
\sum_{u \in \F_2^n-\{0\}} \P(|\Delta_{f_n}(u)|>\mu_l) &  = \sum_{u \in \F_2^n-\{0\}} \P\bigg[\Big|\sum_{x \in C_u} X_{x,u}\Big|> \mu_l/2 \bigg]
\\ &  \leq 2 l e^{-\mu_l^2/4l} \\
&<2 l^{- \epsilon} 
\end{align*}
which tends to $0$ as $n \rightarrow \infty$.
\qed\end{proof}

We now prove a lower bound on $\P(\Delta(f)>\lambda )$. We need the following refinement of the central limit theorem.

\begin{lemma}[\cite{Cramer1938}, Thm. 2]\label{Cramer}
Let $X_0, X_1, \ldots$ be i.i.d random variables satisfying $\E[X_0] = 0$ and $\E[X_0^2] = 1$ and suppose that there exists $T>0$ such that $\E[e^{tX_0}]< \infty$ for all $|t|<T$. Write $Y_k=X_0+X_1+\ldots+X_{k-1}$  and let $\Phi$ be the distribution function of a normal random variable with zero mean and unit variance. If $\theta_k>1$ and $\theta_k/k^{1/6}\rightarrow 0$ as $n\rightarrow\infty$, then 
\[
\frac{\P\left[|Y_k| \geq \theta_k \sqrt{k}\right]}{2\Phi(-\theta_k)}\rightarrow 1.
\] 

\end{lemma}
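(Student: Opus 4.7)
The plan is to prove this classical moderate-deviation estimate by the method of exponential tilting (a saddle-point change of measure). By handling each tail separately it suffices to show that $\P[Y_k\geq\theta_k\sqrt{k}]\sim\Phi(-\theta_k)$, and then to combine the two contributions. Let $\Lambda(t)=\log\E[e^{tX_0}]$ be the cumulant generating function; by hypothesis $\Lambda$ is finite and real-analytic on $(-T,T)$ with $\Lambda(0)=\Lambda'(0)=0$ and $\Lambda''(0)=1$. I would choose the tilting parameter $t_k$ so that $\Lambda'(t_k)=\theta_k/\sqrt{k}$; since $\theta_k/\sqrt{k}\to 0$, Taylor expansion at the origin gives $t_k=\theta_k/\sqrt{k}+O(\theta_k^2/k)$. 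Define the tilted law $\tilde\P$ by $d\tilde\P/d\P=\exp(t_kY_k-k\Lambda(t_k))$; under $\tilde\P$ the variables remain i.i.d., now with mean $\Lambda'(t_k)$ and variance $\Lambda''(t_k)\to 1$.

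Setting $Z_k=Y_k-\theta_k\sqrt{k}$ (which has zero mean under $\tilde\P$), the change-of-measure identity reads
\[
\P[Y_k\geq\theta_k\sqrt{k}]=e^{k\Lambda(t_k)-t_k\theta_k\sqrt{k}}\;\tilde\E\!\left[e^{-t_kZ_k}\mathbf{1}_{Z_k\geq 0}\right].
\]
For the exponential prefactor, Taylor expansion yields $k\Lambda(t_k)-t_k\theta_k\sqrt{k}=-\tfrac12\theta_k^2+O(\theta_k^3/\sqrt{k})$, and the remainder is $o(1)$ exactly because $\theta_k=o(k^{1/6})$. For the remaining expectation I would invoke a local central limit theorem (or an Edgeworth expansion) under $\tilde\P$ to approximate the law of $Z_k$ by a centred Gaussian with variance $k\Lambda''(t_k)\sim k$; integrating $e^{-t_kz}$ against this density on $[0,\infty)$ produces $(t_k\sqrt{2\pi k})^{-1}(1+o(1))\sim(\theta_k\sqrt{2\pi})^{-1}$. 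Combining the two asymptotics and appealing to the Mills-ratio relation $\Phi(-\theta_k)\sim(\theta_k\sqrt{2\pi})^{-1}e^{-\theta_k^2/2}$ yields the desired equivalence for one tail, and symmetry handles the other.

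The main obstacle is the quantitative local CLT under the tilted measure: one needs uniformity strong enough to recover the \emph{ratio}, and not just the logarithm, of the two quantities, so that no spurious multiplicative $1+o(1)$ factor survives. This is precisely where the hypothesis $\theta_k=o(k^{1/6})$ is decisive, since it controls the third-cumulant correction (the leading non-Gaussian term) after it is propagated through the exponential prefactor $e^{k\Lambda(t_k)-t_k\theta_k\sqrt{k}}$; a weaker growth restriction on $\theta_k$ would allow this cubic remainder to contribute a non-trivial constant, destroying the asymptotic equivalence with $2\Phi(-\theta_k)$.
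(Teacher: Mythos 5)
The paper does not prove this lemma at all: it is quoted verbatim from Cramér's 1938 paper and used as a black box in the proof of Proposition 1, where it is applied to the Rademacher variables $(-1)^{f_n(x)+f_n(x+u)}$. So there is no internal proof to compare against; what you have written is a sketch of the standard modern proof of Cramér's moderate-deviation theorem by exponential tilting, and as a sketch it is essentially sound: the choice $\Lambda'(t_k)=\theta_k/\sqrt{k}$, the expansion $k\Lambda(t_k)-t_k\theta_k\sqrt{k}=-\tfrac12\theta_k^2+O(\theta_k^3/\sqrt{k})$, the observation that the hypothesis $\theta_k=o(k^{1/6})$ is exactly what makes the cubic remainder vanish, and the final assembly of the two tails are all correct. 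Two details would need attention in a full write-up. First, your closing asymptotics $\tilde{\E}\bigl[e^{-t_kZ_k}\mathbf{1}_{Z_k\ge 0}\bigr]\sim(\theta_k\sqrt{2\pi})^{-1}$ and the Mills-ratio relation $\Phi(-\theta_k)\sim(\theta_k\sqrt{2\pi})^{-1}e^{-\theta_k^2/2}$ presuppose $\theta_k\to\infty$, whereas the lemma only assumes $\theta_k>1$; in the regime where $\theta_k$ stays bounded the ratio statement follows directly from a Berry--Esseen bound, so the argument should be split into the two regimes (in the paper's application $\theta_k=\sqrt{2\log l}\to\infty$, so only the regime you treat is actually used). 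Second, a density local CLT is not available in the lattice case, and the variables the paper feeds into this lemma are $\pm1$-valued, hence lattice; the robust route for the tilted expectation is integration by parts, $\tilde{\E}\bigl[e^{-t_kZ_k}\mathbf{1}_{Z_k\ge 0}\bigr]=t_k\int_0^\infty e^{-t_kz}\,\tilde{\P}\bigl[0\le Z_k\le z\bigr]\,dz$, followed by a Berry--Esseen (or distribution-function Edgeworth) estimate uniform in the tilt, which is legitimate because $t_k\to0$ keeps the tilted second and third moments under control. With those two repairs your outline is a complete and correct proof of the cited result.
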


We can now apply lemma 2 to obtain the following proposition.

\begin{proposition}\label{LowerBound}

For all $n$ sufficiently large,
\[
\P \left[ |\Delta_{f_n}(u)|\geq 2 \sqrt{l \log l} \right] \geq \frac{1}{2l \sqrt{ \log l}}.
\]

\end{proposition}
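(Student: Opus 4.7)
The plan is to apply Lemma~\ref{Cramer} directly to the sum appearing in the decomposition $\Delta_{f_n}(u) = 2\sum_{x \in C_u} X_{x,u}$ from the proof of Lemma~\ref{MercerLike}. Setting $k = |C_u| = l/2$ and $Y_k = \sum_{x \in C_u} X_{x,u}$, the event $|\Delta_{f_n}(u)| \geq 2\sqrt{l \log l}$ becomes $|Y_k| \geq \sqrt{l \log l}$, which I would rewrite as $|Y_k| \geq \theta_k \sqrt{k}$ with $\theta_k := \sqrt{2 \log l}$. The hypotheses of Lemma~\ref{Cramer} are easy to check: the $X_{x,u}$ are $\pm 1$ Rademacher variables so they have mean $0$, variance $1$ and finite moment generating function everywhere; and with $l = 2^n$ we have $\theta_k = \sqrt{2 n \log 2} > 1$ for $n$ large, while $\theta_k / k^{1/6} = \sqrt{2\log l}/(l/2)^{1/6} \to 0$.

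Applying Lemma~\ref{Cramer} then yields
\[
\P\bigl[|Y_k| \geq \theta_k \sqrt{k}\bigr] \;=\; \bigl(1 + o(1)\bigr)\, 2\Phi(-\theta_k).
\]
The remaining step is a standard Mills-ratio type lower bound on the Gaussian tail:
\[
\Phi(-t) \;\geq\; \frac{1}{\sqrt{2\pi}} \cdot \frac{t}{t^2+1}\, e^{-t^2/2} \qquad (t > 0).
\]
Substituting $t = \theta_k = \sqrt{2\log l}$ kills the exponential nicely since $e^{-\theta_k^2/2} = e^{-\log l} = 1/l$, and the prefactor behaves like $1/\theta_k = 1/\sqrt{2\log l}$ as $n \to \infty$. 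Combining these gives
\[
2\Phi(-\theta_k) \;\geq\; \bigl(1 + o(1)\bigr)\, \frac{1}{\sqrt{\pi}\, l \sqrt{\log l}}.
\]

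Putting the two estimates together, $\P[|\Delta_{f_n}(u)| \geq 2\sqrt{l \log l}]$ is asymptotically at least $(1+o(1))/(\sqrt{\pi}\, l\sqrt{\log l})$. Since $1/\sqrt{\pi} > 1/2$, this exceeds $1/(2 l \sqrt{\log l})$ for all $n$ large enough, which is the claimed bound. There is no real obstacle here once Lemma~\ref{Cramer} is in hand; the only thing to be careful about is matching constants in the Gaussian tail bound so that the numerical factor $1/\sqrt{\pi}$ comfortably dominates $1/2$, leaving slack to absorb the $(1+o(1))$ coming from both Cramér's theorem and the $t/(t^2+1) \sim 1/t$ approximation.
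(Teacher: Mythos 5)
Your proposal is correct and follows essentially the same route as the paper: apply Cramér's theorem (Lemma~\ref{Cramer}) with $k=l/2$, $\theta_k=\sqrt{2\log l}$ to the Rademacher sum from the proof of Lemma~\ref{MercerLike}, then use a Mills-ratio lower bound on the Gaussian tail to get the asymptotic $\frac{1}{\sqrt{\pi}\,l\sqrt{\log l}}$, which dominates $\frac{1}{2l\sqrt{\log l}}$ since $1/\sqrt{\pi}>1/2$. The only cosmetic difference is the exact form of the Gaussian tail bound, which is immaterial.
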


\begin{proof}

From the proof of lemma \ref{MercerLike}, we know that $\Delta_{f_n}(u)$ is the double of a sum of $l/2$ mutually independent variables equally likely to take the value $-1$ or $1$. Notice that $\E[e^{t(-1)^{f_n(0)+f_n(u)}}] = \cosh(t)$ and set $\xi_l = \sqrt{2\log l}$. We can check that $\xi_l / (l/2)^{1/6} \rightarrow 0$ as $n \rightarrow \infty$ and we can now apply lemma \ref{Cramer} to obtain
\[
\P\left[ |\Delta_{f_n}(u)| \geq 2 \sqrt{l \log l} \right] = \P\bigg[ \Big|\sum_{x \in C_u} (-1)^{f_n(x)+f_n(x+u)}\Big| \geq \sqrt{l \log l} \bigg] \sim 2 \Phi(-\xi_l)
\]
with $\Phi$ as in lemma \ref{Cramer}.
Now use the fact that 
\[
\frac{1}{\sqrt{2\pi}z}\left(1 - \frac{1}{z^2} \right)e^{-z^2/2}\leq \Phi(-z) \leq \frac{1}{\sqrt{2\pi}z}e^{-z^2/2} \quad \text{ for } z>0.
\]
So, as $l\rightarrow \infty$,
\[
2 \Phi(- \xi_n) \sim \frac{1}{l\sqrt{\pi  \log l}},
\]
from which the lemma follows.
\qed \end{proof}

\section{Upper bound on $\P \left[ |\Delta_{f_n}(u)|\geq \lambda_l \cap |\Delta_{f_n}(v)|\geq \lambda_l \right]$}

We will proceed by first estimating the expected value of $\left(\Delta_{f_n}(u) \Delta_{f_n}(v)\right)^{2p}$ and then use Markov's inequality. \\

Let $ 0 < r < l$ and choose $r$ elements $x_1, \dots, x_r$ in $\F_{2}^n$, let $f_n$ be a random function in $B_n$, and write $\tilde{f} = (-1)^{f_n}$. First remark that the following properties trivially hold:

\begin{itemize}
	\item $\E[\tilde{f}(x_1) \tilde{f}(x_2) \ldots \tilde{f}(x_r)] = \E[\tilde{f}(x_3) \tilde{f}(x_4) \ldots \tilde{f}(x_r)] \text{ if } x_1 = x_2$
	\item $\E[\tilde{f}(x_1) \tilde{f}(x_2) \ldots \tilde{f}(x_r)] = 0 \text{ or } 1$	
	\item $\E[\tilde{f}(x_1) \tilde{f}(x_2) \ldots \tilde{f}(x_r)] = 1$  if and only if for every $y \in \F_{2}^n$ the set of the $ x_i$'s equals to $ y $ is of even cardinality.

\end{itemize}
Choose $r$  elements $a_1, \ldots, a_r$ of $\F_{2}^n$ and define
\[
E[a_1, \ldots, a_r] = \sum_{(x_1, \dots, x_r) \in \F_2^n \times \F_2^n \times \ldots \times \F_2^n} \E\left[ \tilde{f}(x_1) \tilde{f}(x_1 + a_1)  \ldots \tilde{f}(x_r) \tilde{f}(x_r + a_r)   \right].
\]

\begin{lemma} \label{aplusi}
For $a \in \F_2^n, a \neq 0$, the following inequality is true:
\[
E[a, a_1, \ldots, a_r] \leq 2 \sum_{1 < i \leq r} E[a_1, \ldots, a_i + a, \ldots, a_r].
\]

\end{lemma}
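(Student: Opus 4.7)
The plan is to exploit the combinatorial interpretation already laid out before the lemma: $\E[\tilde f(x_1)\tilde f(x_1+a_1)\cdots\tilde f(x_r)\tilde f(x_r+a_r)]$ equals $1$ precisely when every element of $\F_2^n$ appears an even number of times in the multiset $\{x_1,x_1+a_1,\ldots,x_r,x_r+a_r\}$, and $0$ otherwise. Consequently $E[a_1,\ldots,a_r]$ literally counts the tuples producing an ``all-even'' multiset. Writing the summation variable paired with $a$ as $x_0$, $E[a,a_1,\ldots,a_r]$ counts those $(x_0,x_1,\ldots,x_r)$ for which the $(2r+2)$-element multiset $\{x_0,x_0+a,x_1,x_1+a_1,\ldots,x_r,x_r+a_r\}$ is all-even.

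The key observation is that, since $a\neq 0$, we have $x_0\neq x_0+a$, so $x_0$ must coincide with at least one of the $2r$ entries coming from $x_1,\ldots,x_r$: there must exist an index $i$ with either $x_0=x_i$ or $x_0=x_i+a_i$. A union bound yields
\[
E[a,a_1,\ldots,a_r]\ \le\ \sum_{i}\bigl(N_i^{A}+N_i^{B}\bigr),
\]
where $N_i^A$ (resp.\ $N_i^B$) counts valid tuples with $x_0=x_i$ (resp.\ $x_0=x_i+a_i$). The main step is then to show that each of $N_i^A,N_i^B$ equals $E[a_1,\ldots,a_i+a,\ldots,a_r]$. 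Imposing $x_0=x_i$ deletes a pair of equal entries, leaving a $2r$-element multiset in which only the $i$-th pair has changed, from $\{x_i,x_i+a_i\}$ to $\{x_i+a,x_i+a_i\}$. The bijective substitution $y_i=x_i+a$ on $\F_2^n$ together with the characteristic-two identity $-a=a$ rewrites this pair as $\{y_i,y_i+(a+a_i)\}$, so the reduced multiset is all-even iff $(x_1,\ldots,x_{i-1},y_i,x_{i+1},\ldots,x_r)$ is a valid tuple for $E[a_1,\ldots,a_i+a,\ldots,a_r]$; this gives $N_i^A=E[a_1,\ldots,a_i+a,\ldots,a_r]$. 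The analogous substitution $y_i=x_i+a_i+a$ handles $N_i^B$ identically.

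Summing gives $N_i^A+N_i^B=2\,E[a_1,\ldots,a_i+a,\ldots,a_r]$, and substituting into the union bound produces the desired inequality. The main obstacle is the bookkeeping of the two changes of variable: one must verify that each substitution is a bijection of $\F_2^n$ (so the count over the surviving free variable is preserved), that only the $i$-th pair of the reduced multiset is altered, and that the characteristic-two arithmetic delivers the clean offset $a+a_i$ in both cases. Once these are pinned down, the union bound closes out the proof.
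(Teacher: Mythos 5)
Your argument is correct and is essentially the paper's own proof recast in counting language: the paper likewise observes that, since $a\neq 0$, a term vanishes unless $x$ coincides with some $x_i$ or $x_i+a_i$, and then uses $\tilde f(y)^2=1$ plus the shift of variable ($t=x_i+a_i$ in the first case) to identify each coincidence with a term of $E[a_1,\ldots,a_i+a,\ldots,a_r]$, the factor $2\sum_i$ arising from the same union-bound overcount you use. Note that, like the paper's proof, you actually establish the bound with the sum over all $1\le i\le r$; the range $1<i\le r$ in the displayed statement is evidently a typo for $1\le i\le r$ (and the full range is what the later lemma on $S_{r,s}$ requires).
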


\begin{proof}
(see \cite{Rodier_onthe})
From the previously stated properties, we can write
\begin{align*}
& E[a, a_1, \ldots, a_r] 	\\& = \sum_{(x, x_1, \dots, x_r)} \E\left[ \tilde{f}(x) \tilde{f}(x+a) \tilde{f}(x_1) \tilde{f}(x_1 + a_1)  \ldots \tilde{f}(x_r) \tilde{f}(x_r + a_r)   \right] \\
					& =  \sum_{( x_1, \dots, x_r)} \sum \E\left[ \tilde{f}(x) \tilde{f}(x+a) \tilde{f}(x_1) \tilde{f}(x_1 + a_1)  \ldots \tilde{f}(x_r) \tilde{f}(x_r + a_r)   \right],
\end{align*}
where the last summand is taken over $x \in \{x_1, x_1 + a_1, \ldots, x_r, x_r + a_r \}$. If $x = x_1$,
\begin{align*}
&\E\left[ \tilde{f}(x) \tilde{f}(x+a) \tilde{f}(x_1) \tilde{f}(x_1 + a_1)  \ldots \tilde{f}(x_r) \tilde{f}(x_r + a_r)   \right] \\
&= \E\left[ \tilde{f}(x_1+a) \tilde{f}(x_1 + a_1)  \ldots \tilde{f}(x_r) \tilde{f}(x_r + a_r)   \right]\\
&= \E\left[ \tilde{f}(t) \tilde{f}(t + a_1+a) \tilde{f}(x_1) \tilde{f}(x_1 + a_1)  \ldots \tilde{f}(x_r) \tilde{f}(x_r + a_r)   \right], 
\end{align*}
putting $t = x_1 + a_1$. In the same way, if $ x =  x_1 + a_1$
\begin{align*}
&\E\left[ \tilde{f}(x) \tilde{f}(x+a) \tilde{f}(x_1) \tilde{f}(x_1 + a_1)  \ldots \tilde{f}(x_r) \tilde{f}(x_r + a_r)   \right] \\
&= \E\left[ \tilde{f}(x_1) \tilde{f}(x_1 + a + a_1)  \ldots \tilde{f}(x_r) \tilde{f}(x_r + a_r)   \right].
\end{align*}
\qed
\end{proof}

With this lemma we will show the following.

\begin{lemma}
Let $r$ and $s \geq 2$ be  integers, $a$ and $b$ be distinct elements of $\F_2^n-\{0\}$ and define $S_{r,s} = E[\underbrace{a, \dots, a}_{r}, \underbrace{b, \dots, b}_{s}]$. We have
\[
S_{r,s} \leq 2(s-1)(l + 2r)S_{r,s-2} + 4r(r-1)S_{r-2,s}.
\]
\end{lemma}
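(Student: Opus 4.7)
The plan is to derive the stated inequality by applying Lemma \ref{aplusi} twice to $S_{r,s}$, together with two elementary simplifications. First, if one of the entries of an $E[\cdot]$ expression equals $0$, the corresponding factor is $\tilde f(x_i)\tilde f(x_i)=1$, leaving a free sum over $x_i$; hence $E[\ldots,0,\ldots] = l \cdot E[\ldots,\ldots]$. Second, since $a,b \in \F_2^n - \{0\}$ and $a \neq b$, the element $a+b$ is again a nonzero element of $\F_2^n$, so the intermediate expression that appears between the two applications behaves like an ordinary $E[\cdot]$ with no collapsing entries.

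For the first step I would single out one of the $s\geq 2$ copies of $b$ in
\[
S_{r,s} = E\bigl[b,\underbrace{a,\ldots,a}_{r},\underbrace{b,\ldots,b}_{s-1}\bigr]
\]
and apply Lemma \ref{aplusi}. In the resulting sum, each of the remaining $r+s-1$ entries is replaced by itself plus $b$: the $r$ copies of $a$ become $a+b$, and the $s-1$ remaining copies of $b$ become $0$. The first kind contributes $r$ identical terms $E[\underbrace{a,\ldots,a}_{r-1},a+b,\underbrace{b,\ldots,b}_{s-1}]$; the second kind, after applying the simplification $E[\ldots,0,\ldots] = l\cdot E[\ldots,\ldots]$, contributes $s-1$ copies of $l\cdot S_{r,s-2}$. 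This yields
\[
S_{r,s} \leq 2r\cdot E\bigl[\underbrace{a,\ldots,a}_{r-1},a+b,\underbrace{b,\ldots,b}_{s-1}\bigr] + 2(s-1)\,l\,S_{r,s-2}.
\]

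For the second step I would apply Lemma \ref{aplusi} to the leftover expectation, now singling out the entry $c := a+b$. Adding $c$ to an $a$ produces $b$, and adding $c$ to a $b$ produces $a$; either operation returns a list of the form $(\underbrace{a,\ldots,a}_{r'},\underbrace{b,\ldots,b}_{s'})$ after reordering. The first operation, applied to each of the $r-1$ copies of $a$, yields $S_{r-2,s}$; the second, applied to each of the $s-1$ copies of $b$, yields $S_{r,s-2}$. Hence
\[
E\bigl[\underbrace{a,\ldots,a}_{r-1},a+b,\underbrace{b,\ldots,b}_{s-1}\bigr] \leq 2(r-1)\,S_{r-2,s} + 2(s-1)\,S_{r,s-2}.
\]
Substituting into the previous bound and collecting the coefficient of $S_{r,s-2}$ as $4r(s-1)+2(s-1)l = 2(s-1)(l+2r)$ produces the stated inequality. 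The main obstacle is the combinatorial bookkeeping, namely correctly counting how many of the $r+s-1$ and then $r+s-2$ terms fall into each of the two flavours at each stage; the condition $a \neq b$ (with both nonzero) is used exactly once, to ensure $a+b \neq 0$ so that the second application of Lemma \ref{aplusi} is legitimate.
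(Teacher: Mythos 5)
Your proof is correct and follows essentially the same route as the paper: two successive applications of Lemma \ref{aplusi}, first singling out one copy of $b$ (the $b+b=0$ collisions giving $2(s-1)\,l\,S_{r,s-2}$ and the $a$ collisions giving $2r$ copies of $E[a+b,a,\dots,a,b,\dots,b]$), then singling out $a+b$ (legitimate since $a\neq b$ implies $a+b\neq 0$) to get $2(r-1)S_{r-2,s}+2(s-1)S_{r,s-2}$, and finally collecting coefficients. The bookkeeping and the resulting coefficients match the paper's computation exactly.
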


\begin{proof}
By successive  application of the inequality stated in lemma \ref{aplusi}, we obtain
\begin{align*}
S_{r,s} 	& \leq  2(s-1) l S_{r,s-2} + 2r E[\underbrace{a + b, a, \dots, a}_{r}, \underbrace{b, \dots, b}_{s}] \\
		&\le 2(s-1)lS_{r,s-2} +4r(r-1)E(\underbrace{b,a,\dots,a,}_{r-1} \underbrace{b,b,\dots,b}_{s-1}) \\
		&\qquad +4r(s-1)E(\underbrace{a,\dots,a,}_{r-1} \underbrace{a,b,\dots,b}_{s-1})\\
		& \leq 2(s-1) (l + 2r) S_{r,s-2} + 4r(r-1) S_{r-2,s}.
\end{align*}
\qed \end{proof}

In the case where $r =0$ and $s$ is even, the lemma gives
\[
S_{0,s} \leq (2 (s-1)l) \E[\underbrace{a, \dots, a}_{s-2}] \leq (2(s-1)l)(2(s-3)l) \ldots ((2.3l)(2l)) \leq l^{s/2} \frac{s!}{(s/2)!}.
\]
We write $M_r = l^{r/2}r! / (r/2)!$,   therefore $S_{0,s}\le M_s$. We get from the preceding Lemma, assuming that $r,s \geq 2$:
\[
\frac{S_{r,s}}{ M_r M_s} \leq (1+rl^{-1}) \frac{S_{r,s-2}}{ M_r M_{s-2}} + 2rl^{-1} \frac{S_{r-2,s}}{ M_{r-2}M_{s}}.
\]

\begin{lemma}

For $2 \leq r \leq s$, $r$ and $s$ both even and putting $t = (r+s)/2$, we get 
\begin{equation}
\label{majoration}
\frac{S_{r,s}}{ M_r M_s} \leq  \left( 1 + \frac{2rs}{l(t-1)} \right)^{(t-1)}.
\end{equation}

\end{lemma}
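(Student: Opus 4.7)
My plan is to argue by induction on $t = (r+s)/2$. Write $A_{r,s} := S_{r,s}/(M_r M_s)$ and $F(r,s) := (1 + 2rs/(l(t-1)))^{t-1}$, together with the convention $F(r,0) = F(0,s) = 1$, which matches the boundary estimate $S_{r,0} \le M_r$ available from the calculation just preceding the lemma. The base case is $t = 2$ (so $r = s = 2$): applying the recurrence $A_{r,s} \le (1 + r/l)\, A_{r,s-2} + (2r/l)\, A_{r-2,s}$ displayed above the lemma with $A_{r,0}, A_{0,s} \le 1$ gives $A_{2,2} \le 1 + 6/l \le 1 + 8/l = F(2,2)$.

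For the inductive step, fix $t \ge 3$ and assume the bound at all smaller values of $(r'+s')/2$ and at the boundary. Applying the recurrence and the inductive hypothesis (or the boundary estimate, when $r-2 = 0$ or $s-2 = 0$) reduces the whole claim to the purely algebraic inequality
\[
(1 + r/l)\, F(r, s-2) + (2r/l)\, F(r-2, s) \le F(r, s).
\]

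Verifying this algebraic inequality is the main obstacle. Setting $\alpha := 2rs/(l(t-1))$, $\alpha_1 := 2r(s-2)/(l(t-2))$, $\alpha_2 := 2(r-2)s/(l(t-2))$, a short computation yields $\alpha_i = \lambda_i \alpha$ with $\lambda_1 = 1 - (r-2)/(s(t-2))$ and $\lambda_2 = 1 - (s-2)/(r(t-2))$, both lying in $[0,1]$ for $r,s \ge 2$. Convexity of $x \mapsto (1+x)^{t-2}$ then gives $(1+\lambda_i \alpha)^{t-2} \le \lambda_i (1+\alpha)^{t-2} + (1-\lambda_i)$; substituting into the left side of the target brings it to the form $\mu\, (1+\alpha)^{t-2} + \nu \le (1+\alpha)^{t-1}$ with explicit $\mu, \nu \ge 0$ satisfying $\mu + \nu = 1 + 3r/l$. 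After rearrangement this becomes
\[
(\alpha - 3r/l + \nu)(1+\alpha)^{t-2} \ge \nu,
\]
which is immediate from $(1+\alpha)^{t-2} \ge 1$ as soon as $\alpha \ge 3r/l$.

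The delicate point is the opposite regime $\alpha < 3r/l$, which arises when $s$ is substantially smaller than $r$ (concretely, when $3(t-1) > 2s$). There one must use the sharper Bernoulli bound $(1+\alpha)^{t-2} \ge 1 + (t-2)\alpha$ together with the explicit expression $\nu = (1 + r/l)(r-2)/(s(t-2)) + 2(s-2)/(l(t-2))$ to absorb the defect $3r/l - \alpha$; this is the technical heart of the argument. The boundary cases $r = 2$ or $s = 2$ in the inductive step amount to a one-parameter version of the same manipulation with one of the $F$-factors replaced by $1$, and I would check them by direct calculation rather than incorporate them into the general scheme.
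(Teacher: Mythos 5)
Your induction setup, the base case, and the reduction of the inductive step to the purely algebraic inequality $(1+r/l)F(r,s-2)+(2r/l)F(r-2,s)\le F(r,s)$ follow the same overall strategy as the paper, but the proposal stops exactly where the real work lies: the regime $\alpha<3r/l$ is labelled ``the technical heart'' and left unproven. That regime is not a corner case — for $r=s\ge 4$ one has $\alpha=2r^2/(l(r-1))<3r/l$, so it contains the whole diagonal $r=s=2p$, which is the only case the paper actually uses afterwards (the bound on $S_{2p,2p}$). Worse, the convexity step is genuinely lossy, so the inequality you reduced to, $(\alpha-3r/l+\nu)(1+\alpha)^{t-2}\ge\nu$, is simply false in part of the lemma's range: for $r=s$ one computes $\nu=\frac1r+\frac3l$ and $3r/l-\alpha=\frac{r(r-3)}{l(r-1)}$, so the prefactor $\alpha-3r/l+\nu$ is negative whenever $l<r(r^2-6r+3)/(r-1)$ (e.g.\ $r=s=12$, $l=32$), while the lemma as stated (and as the paper proves it) carries no largeness assumption on $l$ relative to $r,s$. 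Hence ``Bernoulli absorbs the defect'' cannot be made to work in the stated generality; at best it could work when $l$ is large compared with $rs$, and even that computation is only promised, not done — as is the boundary case $r=2$. A further warning: the displayed recurrence you quote contains a typo in the paper; since $M_s/M_{s-2}=2l(s-1)$, the correct coefficient is $1+2r/l$, not $1+r/l$ (the paper's own proof silently uses $1+2r/l$). With the correct coefficient your easy regime becomes $\alpha\ge 4r/l$, i.e.\ essentially only $r=2$, so the unproved case is all the more dominant.

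For contrast, the paper's inductive step loses nothing: it writes $(1+2r/l)\,a_{r,s-2}^{\,t-2}=a_{r,s-2}^{\,t-1}+\frac{2r}{l}\bigl(1-\frac{s-2}{t-2}\bigr)a_{r,s-2}^{\,t-2}$, uses the exact factorization $a_{r,s}^{\,t-1}-a_{r,s-2}^{\,t-1}=\frac{2r(r-2)}{l(t-1)(t-2)}\sum_{i=0}^{t-2}a_{r,s}^{\,i}a_{r,s-2}^{\,t-2-i}$ together with the ordering $a_{r,s}\ge a_{r,s-2}\ge a_{r-2,s}$ (valid because $r\le s$), and closes with the identity $\frac{r-2}{t-2}+\frac{s-r}{2(t-2)}=1$; no relation between $l$ and $r,s$ is needed. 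To repair your argument you would either have to redo the inductive step along these exact lines, or restrict the statement by an explicit hypothesis of the type $l\gg rs$ and then actually carry out the deferred Bernoulli computation under that hypothesis.
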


\begin{proof}

We will proceed by induction. The inequality is clearly true for $r = s = 2$. 
Hence we can suppose  $2 \leq r \leq s$ and $t>2$.
Write $a_{r,s}=1 + \frac{2rs}{l(t-1)}$ with $t=(r+s)/2$.
Suppose now that the relation (\ref{majoration}) is true for 
$\frac{S_{r,s-2}}{ M_r M_{s-2}} $ and $\frac{S_{r-2,s}}{ M_{r-2}M_{s}} $. It implies
\begin{align*}
\frac{S_{r,s}}{ M_r M_s}  	& \leq (1+rl^{-1}) \frac{S_{r,s-2}}{ M_r M_{s-2}} + 2rl^{-1} \frac{S_{r-2,s}}{ M_{r-2}M_{s}} \\
					& = \left( 1 + 2rl^{-1} \right) ( a_{r,s-2})^{(t-2)} + 2rl^{-1}(a_{r-2,s})^{(t-2)} \\
					& =  \left( a_{r,s-2} \right)^{(t-1)} + 2rl^{-1}\left( 1 - \frac{s-2}{t-2} \right) \left( a_{r,s-2} \right)^{(t-2)} + 2rl^{-1} \left(a_{r-2,s}\right)^{(t-2)}.
\end{align*}
 Observe that $1 - \frac{s-2}{t-2} = - \frac{s-r}{2(t-2)}$.
So we want to show that the sum 
\begin{equation}
\label{somme}
\left( a_{r,s} \right)^{(t-1)}   -   \left( a_{r,s-2} \right)^{(t-1)} 
+ 2rl^{-1} \frac{s-r}{2(t-2)} \left( a_{r,s-2} \right)^{(t-2)}  - 2rl^{-1} \left(a_{r-2,s}\right)^{(t-2)}
\end{equation}
is positive.
One has
$
 a_{r,s}   -  a_{r,s-2}  = \frac{2r(s-2)}{l(t-2)(t-1)}  \geq 0,
$
from which we deduce
\begin{align*}
& \left( a_{r,s} \right) ^{t-1} - \left( a_{r,s-2} \right)^{(t-1) } = \frac{2r(r-2)}{ l(t-1)(t-2)} \sum_{i=0}^{t-2} \left(  a_{r,s} \right)^ i \left( a_{r,s-2} \right)^{(t-2-i)}.
\end{align*}
By dividing the  sum (\ref{somme})   by $2r/l$ and using $2 \leq r \leq s$    we get:
\begin{multline*}
 \frac{(r-2)}{ (t-1)(t-2)} \sum_{i=0}^{t-2} \left(  a_{r,s} \right)^ i \left( a_{r,s-2} \right)^{(t-2-i)}
 + \frac{s-r}{2(t-2)} \left( a_{r,s-2} \right) ^{t-2}  -  \left( a_{r-2,s} \right) ^{t-2} 
  \\ \geq  \left( a_{r-2,s} \right) ^{t-2} \left(  \frac{r-2}{t-2} + \frac{s - r}{2(t-2)} - 1  \right). 
\end{multline*}
because 
$
2r(s-2) -2s(r-2) = 4 (s-r) \geq 0\
$
and consequently
$$
a_{r,s} \ge a_{r,s-2}    \geq a_{r-2,s}  .
$$
Finally, we verify that $\frac{r-2}{t-2} + \frac{s-r}{2(t-2)} - 1 = 0$.
\qed 
\end{proof}

Now, we can conclude that 

\[
S_{2p,2p} \leq l^{2p} \left(  \frac{(2p)!}{p!}\right)^2 \left(  1 + \frac{8p^2}{l(2p -1)} \right)^{2p - 1},
\]
which will be used to prove the following:

\begin{proposition} \label{proposition}
Write $\lambda_l = 2\sqrt{l \log l}$. If $f_n$ runs over the set $B_n$ then, for distinct $u, v \in \F_2^n-\{0\}$ and $n$ sufficiently large, 

\[
\P \left[ |\Delta_{f_n}(u)|\geq \lambda_l \cap |\Delta_{f_n}(v)|\geq \lambda_l \right] < 4 l^{-2}.
\]
\end{proposition}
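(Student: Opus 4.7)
The plan is to combine the moment bound on $S_{2p,2p}$ with Markov's inequality, and then optimize the free integer parameter $p$.

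First I would unpack $\E\bigl[(\Delta_{f_n}(u)\Delta_{f_n}(v))^{2p}\bigr]$. Expanding the $2p$-th power of each $\Delta_{f_n}$ as a sum over $2p$-tuples of $x_i$'s (resp.\ $y_j$'s) and exchanging sum and expectation, one recognises exactly $E[\underbrace{u,\dots,u}_{2p},\underbrace{v,\dots,v}_{2p}]=S_{2p,2p}$. The event in the proposition implies $|\Delta_{f_n}(u)\Delta_{f_n}(v)|\geq\lambda_l^2$, so Markov's inequality applied at the $2p$-th moment gives
\[
\P\bigl[|\Delta_{f_n}(u)|\geq\lambda_l\cap|\Delta_{f_n}(v)|\geq\lambda_l\bigr]\;\leq\;\frac{S_{2p,2p}}{\lambda_l^{4p}}.
\]

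Next I would feed in the bound from the preceding lemma together with $\lambda_l^{4p}=(4l\log l)^{2p}$, obtaining
\[
\frac{S_{2p,2p}}{\lambda_l^{4p}}\;\leq\;\frac{1}{16^p(\log l)^{2p}}\left(\frac{(2p)!}{p!}\right)^{\!2}\left(1+\frac{8p^2}{l(2p-1)}\right)^{\!2p-1}.
\]
Applying Stirling's formula gives $(2p)!/p!\sim\sqrt{2}\,(4p/e)^p p^p$, so $((2p)!/p!)^2\sim 2\cdot 16^p(p/e)^{2p}$, and the leading factor of $16^p$ cancels, leaving
\[
\frac{S_{2p,2p}}{\lambda_l^{4p}}\;\lesssim\;2\left(\frac{p}{e\log l}\right)^{\!2p}\left(1+\frac{8p^2}{l(2p-1)}\right)^{\!2p-1}.
\]

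The decisive step is the choice of $p$. Setting $p=\lfloor\log l\rfloor$ (natural logarithm) makes $p/(e\log l)\to 1/e$, so the first factor behaves like $e^{-2p}\sim l^{-2}$, which is exactly the target rate. With this choice $p^2/l=(\log l)^2/l\to 0$, so the correction $(1+8p^2/(l(2p-1)))^{2p-1}$ tends to $1$. Consequently, for $n$ large enough the bound is at most $2(1+o(1))\,l^{-2}<4l^{-2}$, as required. The main obstacle is the optimisation: one must verify that $p=\lfloor\log l\rfloor$ is the sweet spot where the Stirling-driven growth of $((2p)!/p!)^2$ is just barely dominated by $16^p(\log l)^{2p}$, while the correction term from the lemma stays subexponential. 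Making the asymptotic Stirling bounds into honest inequalities valid for all $p$ large enough, and then matching constants so that the final bound is strictly below $4l^{-2}$ (and not merely $O(l^{-2})$), is the only calculation that requires care.
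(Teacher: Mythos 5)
Your proposal is correct and follows essentially the same route as the paper: expand the even $2p$-th moment to recognise $S_{2p,2p}$, apply Markov, insert the lemma's bound, and use Stirling with $p$ of order $\log l$ (the paper fixes $p=n$, proves the one-sided bound $<l^{-2}$ for $n\ge 7$ and gets the factor $4$ from the sign combinations, which your two-sided even-moment Markov makes unnecessary). One caution on your decisive step: with $p=\lfloor\log l\rfloor$ the bound $e^{-2p}$ alone is only $\le e^{2}l^{-2}$ (it is not $\sim l^{-2}$), which naively would push the constant above $4$; you must keep the full factor $(p/(e\log l))^{2p}$, whose extra piece $(p/\log l)^{2p}\approx e^{-2(\log l-p)}$ cancels the excess and gives $(1+o(1))\,l^{-2}$, after which the final bound $2(1+o(1))\,l^{-2}<4l^{-2}$ (or $3(1+o(1))\,l^{-2}$ with honest Stirling constants) indeed holds.
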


\begin{proof}
With the notation of previous lemmas and applying Markov's inequality:
\begin{align} \label{befStirl}
&\P \bigg( \Big( \sum_{x \in \F_{2}^n} \tilde{f}(x)\tilde{f}(x+u)  \geq \theta_1 \Big) \cap \Big( \sum_{x \in \F_{2}^n} \tilde{f}(x)\tilde{f}(x+v)  \geq \theta_2 \Big) \bigg)  & \nonumber \\
& \leq \E \bigg[ \Big( \sum_{x \in \F_{2}^n} \tilde{f}(x)\tilde{f}(x+u) \sum_{x \in \F_{2}^n} \tilde{f}(x)\tilde{f}(x+v)  \Big)^{2p} \bigg] \Big/ (\theta_1 \theta_2) ^{2p} & \nonumber \\
& \leq S_{2p,2p} / (\theta_1 \theta_2) ^{2p} \leq \left( 1+ \frac{8p^2}{l(2p -1)}  \right)^{2p-1} l^{2p} \left( \frac{(2p)!}{p!} \right)^2 \Big/ (\theta_1 \theta_2) ^{2p} &.
\end{align}
If we take $\theta_1 = \theta_2 = \lambda_l$ and $p =n$, we have
\begin{multline*}
\P \bigg( \Big( \sum_{x \in \F_{2}^n} \tilde{f}(x)\tilde{f}(x+u)   \geq \lambda_l \Big) \cap \Big( \sum_{x \in \F_{2}^n} \tilde{f}(x)\tilde{f}(x+v)  \geq \lambda_l \Big) \bigg) \\
\leq \left( 1+ \frac{8n^2}{l(2n -1)}  \right)^{2n-1} \left( \frac{(2n)!}{n!} \right)^2 \Big/ (4n) ^{2n}.
\end{multline*}
By Stirling's approximation, $\sqrt{2\pi k} k^k e^{-k} \leq k! \leq \sqrt{3 \pi k} k^k e^{-k}$, we get 
\[
\frac{(2n)!}{n!} \leq \frac{\sqrt{3 \pi 2n} (2n)^{2n} e^{-2n}}{\sqrt{2\pi n} n^n e{-n}} \leq \sqrt{3}.2^{2n} n^n e^{-n}.
\]
Plugging it into inequality \ref{befStirl}, we get:
\begin{multline*}
\P \bigg( \Big( \sum_{x \in \F_{2}^n} \tilde{f}(x)\tilde{f}(x+u)   \geq \lambda_l \Big) \cap \Big( \sum_{x \in \F_{2}^n} \tilde{f}(x)\tilde{f}(x+v)  \geq \lambda_l \Big) \bigg)\\
 \leq 3 \left( 1+ \frac{8n^2}{l(2n -1)}   \right)^{2n-1} e^{-2n}< l^{-2}
\end{multline*}
for $n \geq 7$. The proposition is now straightforward.

\qed \end{proof}

\section{Proof of theorem 1}

We first recall the following inequality from martingales theory:

\begin{lemma}[\cite{McDarmid}]\label{McDiarmidInq}
Let $X_0,\ldots, X_{l-1}$ be mutually independent random variables taking values in a set $S$. Let $g:S^l \rightarrow \R$ be a measurable function and suppose that
\[
|g(x)-g(y)|\leq c
\]
whenever $x$ and $y$ differ only in one coordinate. Define the random variable $Y = g(X_0,\ldots, X_{l-1})$. Then, for all $\theta \geq 0$,
\[
\P\left[|Y-\E[Y]|\geq \theta\right] \leq 2 \exp\Big(-\frac{2\theta^2}{c^2l}\Big).
\] 
\end{lemma}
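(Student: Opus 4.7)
The plan is to prove this via the classical martingale method: build a Doob martingale from the filtration generated by the $X_i$'s and apply a Hoeffding-type bound to its increments. Concretely, let $\mathcal{F}_k = \sigma(X_0, \ldots, X_{k-1})$ for $0 \leq k \leq l$, with $\mathcal{F}_0$ trivial, and set $Z_k = \E[Y \mid \mathcal{F}_k]$. Then $(Z_k)$ is a martingale with $Z_0 = \E[Y]$ and $Z_l = Y$, so $Y - \E[Y]$ telescopes as the sum of the increments $D_k = Z_k - Z_{k-1}$, $k = 1, \ldots, l$.

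The crucial first step is to upgrade the coordinate-wise Lipschitz hypothesis to the statement that, conditionally on $\mathcal{F}_{k-1}$, the increment $D_k$ lies in an interval of length at most $c$. To see this, I would write $h_k(x_0, \ldots, x_k) = \E[g(x_0, \ldots, x_k, X_{k+1}, \ldots, X_{l-1})]$, which is well defined by independence of the $X_i$'s, so that $Z_k = h_k(X_0, \ldots, X_k)$ and $Z_{k-1} = \E_{X_k}[h_k(X_0, \ldots, X_{k-1}, X_k)]$. The bounded-differences assumption on $g$ transfers to $h_k$ in the last coordinate, so for any two values $x_k, x_k'$ one has $|h_k(X_0, \ldots, X_{k-1}, x_k) - h_k(X_0, \ldots, X_{k-1}, x_k')| \leq c$. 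Consequently, as $X_k$ varies, $h_k(X_0, \ldots, X_{k-1}, X_k)$ takes values in an interval of length at most $c$ (measurable with respect to $\mathcal{F}_{k-1}$), and therefore so does $D_k$ after subtracting the conditional mean $Z_{k-1}$.

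Next I would invoke Hoeffding's lemma, which says that a mean-zero random variable supported in an interval of length $L$ has moment generating function bounded by $e^{t^2 L^2/8}$. Applied conditionally on $\mathcal{F}_{k-1}$ (with $L = c$), this yields $\E[e^{t D_k} \mid \mathcal{F}_{k-1}] \leq e^{t^2 c^2/8}$ almost surely. Iterating with the tower property across $k = 1, \ldots, l$ gives $\E[e^{t(Y - \E[Y])}] \leq e^{t^2 c^2 l/8}$. Markov's inequality then produces $\P[Y - \E[Y] \geq \theta] \leq e^{-t\theta} e^{t^2 c^2 l/8}$, which is minimized at $t = 4\theta/(c^2 l)$ and gives $\exp(-2\theta^2/(c^2 l))$. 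A symmetric argument applied to $-Y$ plus the union bound supplies the factor $2$ in front.

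The main obstacle is the sharpening in the second paragraph: the naive bound $|D_k| \leq c$ is immediate from the hypothesis and independence, but it only yields a constant $1/2$ in the exponent via Azuma's inequality in its crude form. Recovering the optimal constant $2$ requires the finer observation that the conditional \emph{range} (not just the conditional absolute value) of $D_k$ is at most $c$, which is exactly where independence of the $X_i$'s is used essentially, since it is independence that lets one freeze $X_0, \ldots, X_{k-1}$ and treat $h_k(X_0, \ldots, X_{k-1}, \cdot)$ as a bona fide function of a single coordinate. Once this bound is in hand, Hoeffding's lemma, the tower property, and Chernoff optimization are entirely routine.
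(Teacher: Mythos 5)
Your proof is correct, including the computation of the optimal constant: the key refinement that the \emph{conditional range} of each martingale increment $D_k$ is at most $c$ (rather than merely $|D_k|\leq c$) is exactly what upgrades the crude Azuma exponent $\theta^2/(2c^2l)$ to the stated $2\theta^2/(c^2l)$, and your use of independence to freeze $X_0,\ldots,X_{k-1}$ and treat $h_k$ as a function of the single coordinate $x_k$ is the right mechanism. Note, however, that the paper offers no proof of this lemma at all --- it is quoted directly from McDiarmid's survey \cite{McDarmid} --- so there is nothing to compare against within the paper itself; your argument is precisely the classical bounded-differences proof found in that reference (Doob martingale, conditional Hoeffding lemma, tower property, Chernoff optimization at $t=4\theta/(c^2l)$, and a union bound for the two-sided statement).
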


Now let $(x_i)_{0\le i\le l-1}$ be a bijection between the set of nonnegative numbers strictly smaller than $l$ and the set $\F_2^n$.
Let 
$\sigma_u$ for $u\in\F_2^n$ be the substitution in the  set $\{0,\dots,l-1\}$ such that
$x_{\sigma_u( i)}= x_i+u$ and let $g$ be the function
\begin{eqnarray*}
g: \{0,1\}^l &\rightarrow &\R \\
(X_0, \ldots, X_{l-1}) &\mapsto &\max_{u\in\F_2^n-\{0\}} \sum_{i = 0}^{l-1} (-1)^{X_i+X_{\sigma_u(i)}}.
\end{eqnarray*}
 Clearly $g(f(x_0), \ldots, f(x_{l-1}) )= \Delta(f)$ and we can apply lemma
\ref{McDiarmidInq} with $c = 4$ to obtain the following corollary.

\begin{corollary}\label{Corol}

For $\theta \geq 0$,
\[
\P\left[|\Delta(f_n) - \E[\Delta(f_n)] |\geq \theta \right] \leq 2\exp\Big({-\frac{\theta^2}{8l}}\Big).
\] 

\end{corollary}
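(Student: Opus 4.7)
The plan is to apply Lemma \ref{McDiarmidInq} directly to the function $g$ defined just before the statement, identifying $Y$ with $\Delta(f_n)$ via the identity $g(f_n(x_0),\dots,f_n(x_{l-1})) = \Delta(f_n)$, and then verify the bounded differences hypothesis with the constant $c=4$ announced in the statement. Since the $f_n(x_i)$ are mutually independent uniform bits (as $f_n$ is drawn uniformly from $B_n$), the hypotheses on the input variables are immediate, so the only nontrivial point is the Lipschitz-type bound on $g$.

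To check that changing one coordinate alters $g$ by at most $4$, I would first examine a single summand $S_u(X_0,\dots,X_{l-1}) = \sum_{i=0}^{l-1}(-1)^{X_i+X_{\sigma_u(i)}}$ for a fixed $u\in\F_2^n-\{0\}$. The key observation is that $\sigma_u$ is an involution: from $x_{\sigma_u(i)} = x_i + u$ one gets $x_{\sigma_u(\sigma_u(i))} = x_i + u + u = x_i$, so $\sigma_u\circ\sigma_u = \mathrm{id}$, and moreover $\sigma_u(i)\ne i$ since $u\ne 0$. Consequently, flipping the single coordinate $X_i$ affects the sum $S_u$ only through the two distinct indices $j=i$ and $j=\sigma_u(i)$; each of those two terms takes values in $\{-1,+1\}$, so their combined contribution can shift by at most $4$. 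Hence $|S_u(x)-S_u(y)|\le 4$ whenever $x,y$ differ in one coordinate, and taking the maximum over $u$ preserves this bound: $|g(x)-g(y)|\le 4$.

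Once the bounded differences constant $c=4$ is established, I would simply invoke Lemma \ref{McDiarmidInq} with $Y=\Delta(f_n)$ and this value of $c$, obtaining
\[
\P\bigl[|\Delta(f_n)-\E[\Delta(f_n)]|\ge\theta\bigr] \le 2\exp\Bigl(-\frac{2\theta^2}{16\,l}\Bigr) = 2\exp\Bigl(-\frac{\theta^2}{8l}\Bigr),
\]
which is exactly the stated inequality.

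The only real obstacle in this argument is confirming that the $\sigma_u$-pairing of coordinates is a genuine involution with no fixed points for $u\ne 0$; this is what guarantees that the two affected summands are distinct and prevents any factor-of-two inflation in the Lipschitz constant. Everything else reduces to the trivial bound $|(-1)^{a}-(-1)^{b}|\le 2$ applied twice and the standard fact that taking a pointwise maximum of functions sharing a common modulus of continuity preserves that modulus.
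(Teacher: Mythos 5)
Your argument is exactly the paper's: identify $\Delta(f_n)$ with $g$ evaluated at the independent uniform bits $f_n(x_i)$ and apply Lemma \ref{McDiarmidInq} with $c=4$; the paper simply asserts $c=4$ without the detailed verification you give (that $\sigma_u$ is a fixed-point-free involution, so a single bit flip moves only two $\pm1$ summands of each $S_u$, and the maximum over $u$ inherits the bound). So the proposal is correct and essentially the same proof, just with the bounded-differences constant checked explicitly.
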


The first part of Theorem 1 is proven by the following result.

\begin{theorem} \label{firstPart}
The following limit holds when $n\rightarrow \infty$,
\[
\frac{\E\left[\Delta(f_n) \right]}{\sqrt{l \log l}} \rightarrow 2.
\]
\end{theorem}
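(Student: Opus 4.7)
The plan is to sandwich $\E[\Delta(f_n)]/\sqrt{l\log l}$ between $2-o(1)$ and $2+o(1)$, combining three tools already in place: the Hoeffding-type tail implicit in the proof of Lemma \ref{MercerLike}, the moment bounds from Propositions \ref{LowerBound} and \ref{proposition}, and the McDiarmid concentration in Corollary \ref{Corol}.

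For the upper bound I would integrate the stronger tail estimate that the proof of Lemma \ref{MercerLike} actually furnishes, namely $\P[\Delta(f_n) > t] \leq 2l\exp(-t^2/(4l))$ for every $t > 0$. Splitting $\E[\Delta(f_n)] = \int_0^\infty \P[\Delta(f_n) > t]\, dt$ at $t_0 = (2+\epsilon)\sqrt{l\log l}$, the part below $t_0$ is at most $t_0$, while the Gaussian tail above $t_0$ is of order $O(l^{3/2-(2+\epsilon)^2/4}) = o(\sqrt{l\log l})$ for every $\epsilon > 0$. Dividing by $\sqrt{l\log l}$ and letting $\epsilon \to 0$ yields $\limsup_n \E[\Delta(f_n)]/\sqrt{l\log l} \leq 2$.

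For the lower bound I would run a Paley-Zygmund argument on the counting variable $Y = |\{u \in \F_2^n-\{0\} : |\Delta_{f_n}(u)| \geq \lambda_l\}|$, with $\lambda_l = 2\sqrt{l\log l}$. Proposition \ref{LowerBound} gives $\E[Y] \geq (l-1)/(2l\sqrt{\log l}) \geq 1/(4\sqrt{\log l})$ for $n$ large, while Proposition \ref{proposition} controls the pairwise correlations via
\[
\E[Y^2] = \E[Y] + \sum_{u \neq v} \P\bigl[|\Delta_{f_n}(u)| \geq \lambda_l \cap |\Delta_{f_n}(v)| \geq \lambda_l\bigr] \leq \E[Y] + (l-1)(l-2)\cdot 4l^{-2} \leq \E[Y] + 4.
\]
Paley-Zygmund then delivers $\P[\Delta(f_n) \geq \lambda_l] = \P[Y \geq 1] \geq \E[Y]^2/\E[Y^2] \geq c/\log l$ for an absolute constant $c > 0$ and all sufficiently large $n$.

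To close the argument I would invoke Corollary \ref{Corol}: if $\E[\Delta(f_n)] \leq (1-\delta)\lambda_l$ for some $\delta > 0$, then with $\theta = \delta \lambda_l$ one obtains $\P[\Delta(f_n) \geq \lambda_l] \leq 2\exp(-\delta^2 \lambda_l^2/(8l)) = 2 l^{-\delta^2/2}$, which decays polynomially in $l$ and is eventually much smaller than $c/\log l$, contradicting the Paley-Zygmund estimate. Hence $\liminf_n \E[\Delta(f_n)]/\lambda_l \geq 1-\delta$ for every $\delta > 0$, completing the sandwich. The main obstacle is the second step: Proposition \ref{proposition}'s $4l^{-2}$ bound must be sharp enough to keep $\E[Y^2]$ bounded even when $\E[Y]$ is only $\Theta(1/\sqrt{\log l})$, and this is precisely what allows the second-moment lower bound on $\P[Y \geq 1]$ to dominate the sub-Gaussian McDiarmid tail.
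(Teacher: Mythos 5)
Your proposal is correct, and it follows the paper's overall strategy (upper bound from the Hoeffding-type tail underlying Lemma \ref{MercerLike}; lower bound by showing $\P[\Delta(f_n)\geq\lambda_l]$ decays only logarithmically and playing that against Corollary \ref{Corol}), but both halves are executed differently. For the upper bound the paper does not integrate the tail: it combines Lemma \ref{MercerLike} with the concentration of Corollary \ref{Corol} through a union/triangle-inequality bound on the (deterministic) event $\{\E[\Delta(f_n)]/\sqrt{l\log l}-2>\epsilon\}$, forcing its probability to $0$ and hence the event to be eventually empty; your direct computation $\E[\Delta(f_n)]\leq t_0+\int_{t_0}^{\infty}2le^{-t^2/4l}\,dt$ with $t_0=(2+\epsilon)\sqrt{l\log l}$ is a legitimate and arguably more transparent alternative (the tail integral is indeed $O(l^{3/2-(2+\epsilon)^2/4})=o(\sqrt{l\log l})$), and it does not need Corollary \ref{Corol} at all for the $\limsup$. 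For the lower bound the paper applies the Bonferroni inequality restricted to a subset $W\subset\F_2^n-\{0\}$ of size $\lceil l/\log l\rceil$; this sparsification is exactly what keeps the pairwise term $4|W|^2/l^2$ below the first-order term and yields $\P[\Delta(f_n)\geq\lambda_l]\geq 1/(10(\log l)^{3/2})$, whereas your Paley--Zygmund/second-moment argument over all $u$ needs no such restriction because the pairwise sum, controlled by Proposition \ref{proposition} as $\E[Y^2]\leq\E[Y]+4$, only enters the denominator; it even gives the slightly stronger estimate $c/\log l$, which is more than enough. The closing step --- if $\E[\Delta(f_n)]\leq(1-\delta)\lambda_l$ then Corollary \ref{Corol} gives the polynomially small bound $2l^{-\delta^2/2}$, contradicting the logarithmic lower bound --- is the same as the paper's proof that $N(\delta)$ is finite, so your argument assembles into a complete and valid proof of the theorem.
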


\begin{proof}

By the union bound and triangle inequality, we have, for all $\epsilon > 0$
\begin{multline*}
\P\left[\frac{\E\left[\Delta(f_n) \right]}{\sqrt{l \log l}} - 2  > \epsilon \right] \\ \leq \P\left[\frac{\E\left[\Delta(f_n) \right]}{\sqrt{l \log l}} - \frac{\Delta(f_n)}{\sqrt{l \log l}} > \frac{1}{2}\epsilon \right] + \P\left[ \frac{\Delta(f_n)}{\sqrt{l \log l}} - 2 > \frac{1}{2}\epsilon \right] .
\end{multline*}
The right hand side of the last inequality goes to zero as $n \rightarrow \infty$ by corollary \ref{Corol} and proposition \ref{MercerLike}. So we conclude
\[
\limsup_{n \rightarrow \infty} \frac{\E\left[\Delta(f_n) \right]}{\sqrt{l \log l}} \leq 2.
\] 

The proof of the claim is based on an idea in \cite{Shpunt}: to bound by below $\frac{\E\left[\Delta(f_n) \right]}{\sqrt{l \log l}} $, we will prove that the following set is finite. Let $\delta > 0$ and define
\[
N(\delta) = \left\lbrace n>1: \frac{\E\left[\Delta(f_n) \right]}{\sqrt{l \log l}} < 2- \delta \right\rbrace.
\]
Now set $\lambda_l = 2\sqrt{l \log l}$ and, for each $n \geq7$, chose a subset $W \subset \F_2^n-\{0\}$ of size $\lceil l/\log l \rceil$.  Hence 
\begin{multline*}
\P\left[\Delta(f_n) \geq \lambda_l \right]  \geq  \P\Big[\max_{u \in W} |\Delta_{f_n}(u)| \geq \lambda_l \Big] \\
 \geq \sum_{u \in W} \P\left[|\Delta_{f_n}(u)| \geq \lambda_l \right] - \sum_{u,v \in W, u \not = v} \P\left[|\Delta_{f_n}(u)|\geq \lambda_l \cap |\Delta_{f_n}(v)|\geq \lambda_l \right]
\end{multline*}
by Bonferroni inequality. Propositions \ref{LowerBound} and \ref{proposition} give, for $l$ big enough:
\begin{align*}
\P\left[\Delta(f_n) \geq \lambda_l \right] & \geq |W| \frac{1}{2l\sqrt{ \log l}} - 4\frac{|W|^2}{l^2}\\
 & \geq \frac{1}{10(\log l)^{3/2}}. \numberthis \label{eqn}
\end{align*}
By definition of $N(\delta)$, $\lambda_l \geq \E[\Delta(f_n)]$ so we can apply corollary \ref{Corol} with $\lambda_l - \E[\Delta(f_n)]$ so that for all $n \in N(\delta)$, 
\[
\P\left[ \Delta(f_n) \geq \lambda_l \right] \leq 2\exp\Big({-\frac{1}{8l}(\lambda_l - \E[\Delta(f_n)])^2}\Big).
\]
Comparison with \eqref{eqn} implies
\[
\frac{\E[\Delta(f_n)]}{\sqrt{l \log l}} \geq 2 - \sqrt{\frac{12 \log \log l + 8 \log 20}{\log l}},
\]
which means in view of its definition that $N(\delta)$ is finite for all $\delta > 0$.
\qed \end{proof}

We can now easily deduce the second part of Theorem 1.

\begin{corollary}

As $n \rightarrow \infty$,
\[
\frac{\Delta(f_n)}{\sqrt{l \log l}} \rightarrow 2 \quad \textit{in probability.}
\]

\end{corollary}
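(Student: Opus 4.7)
My plan is to reduce the convergence in probability to the two ingredients already established, namely Theorem \ref{firstPart} (convergence of the normalised expectation to $2$) and Corollary \ref{Corol} (concentration of $\Delta(f_n)$ around its mean). The starting point is the triangle inequality
\[
\left|\frac{\Delta(f_n)}{\sqrt{l \log l}} - 2\right| \;\leq\; \left|\frac{\Delta(f_n) - \E[\Delta(f_n)]}{\sqrt{l \log l}}\right| + \left|\frac{\E[\Delta(f_n)]}{\sqrt{l \log l}} - 2\right|,
\]
which via the union bound yields, for any fixed $\epsilon > 0$,
\[
\P\!\left[\left|\frac{\Delta(f_n)}{\sqrt{l \log l}} - 2\right| > \epsilon\right] \leq \P\!\left[\bigl|\Delta(f_n) - \E[\Delta(f_n)]\bigr| > \tfrac{\epsilon}{2}\sqrt{l \log l}\right] + \P\!\left[\left|\frac{\E[\Delta(f_n)]}{\sqrt{l \log l}} - 2\right| > \tfrac{\epsilon}{2}\right].
\]

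The second summand is deterministic in $n$ and, by Theorem \ref{firstPart}, equals $0$ for all $n$ sufficiently large, so it vanishes in the limit. For the first summand I invoke Corollary \ref{Corol} with $\theta = \tfrac{\epsilon}{2}\sqrt{l \log l}$, giving
\[
\P\!\left[\bigl|\Delta(f_n) - \E[\Delta(f_n)]\bigr| \geq \tfrac{\epsilon}{2}\sqrt{l \log l}\right] \leq 2\exp\!\left(-\frac{\epsilon^2 l \log l}{32\, l}\right) = 2\exp\!\left(-\frac{\epsilon^2 \log l}{32}\right),
\]
which tends to $0$ as $n\to\infty$ since $l = 2^n \to \infty$. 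Combining the two bounds gives the desired convergence in probability.

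There is no real obstacle here: all the heavy lifting (the Doob/McDiarmid concentration inequality and the identification of the mean) has been done in the previous section. The only point deserving a touch of care is picking $\theta$ of the right order: it must be $o(\sqrt{l\log l})$ relative to the normalisation but large enough so that $\theta^2/l \to \infty$, and $\theta = \tfrac{\epsilon}{2}\sqrt{l \log l}$ achieves exactly this, making the exponent grow like $\log l$. The argument is essentially a standard ``concentration plus convergence of the mean implies convergence in probability'' scheme.
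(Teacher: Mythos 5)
Your proof is correct and follows essentially the same route as the paper: the same triangle-inequality decomposition into a concentration term handled by Corollary \ref{Corol} and a deterministic term handled by Theorem \ref{firstPart}; you merely make the exponential bound $2\exp(-\epsilon^2\log l/32)$ explicit where the paper leaves it implicit.
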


\begin{proof}
By the triangle inequality 
\begin{multline}\label{triangularInequality}
\P \left[ \left| \frac{\Delta(f_n)}{\sqrt{l \log l}} - 2 \right| > \epsilon \right] \\ \leq \P \left[ \left| \frac{\Delta(f_n)}{\sqrt{l \log l}} - \frac{\E\left[ \Delta(f_n) \right]}{\sqrt{l \log l}} \right| > \frac{\epsilon}{2} \right] + \P \left[ \left| \frac{\E\left[ \Delta(f_n) \right]}{\sqrt{l \log l}} - 2 \right| > \frac{\epsilon}{2} \right]. 
\end{multline}

By Theorem \ref{firstPart} and Corollary \ref{Corol}, the two terms on the right-hand side go to 0 as $n \rightarrow \infty$ which proves the corollary.

\qed 
\end{proof}

The combination of these two last theorems gives the proof of Theorem 1.

\section{Stronger convergence}

The goal of this section is to use the Borel-Cantelli Lemma to prove the following Theorem: 

\begin{theorem}
Denote by $\Omega$ the set of infinite sequences of elements of $\F_2$ and by $B$ the space of functions from $\Omega$ to $\F_2$. For every $f \in B$, we denote its restriction to its $n$ first coordinates by $f_n$ which is in $B_n$. We define the following probability measure on $B$:
\[
\P[f\in B: f_n = g \in B_n] = 2^{-2^n}
\]
for all $g \in B_n$. Now let $f$ be a random element in $B$. Then
\[\
\lim_{n\rightarrow \infty} \frac{\Delta(f_n)}{\sqrt{l \log l}} = 2 \text{ almost surely.}
\]
\end{theorem}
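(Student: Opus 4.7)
The plan is a two-sided Borel--Cantelli argument, exploiting the fact that the tail bounds established earlier, once rewritten with $l=2^n$, actually decay geometrically in $n$ and so are summable. The probability measure on $B$ couples all the $f_n$ on a common sample space, which is what allows the Borel--Cantelli lemma to upgrade the previous in-probability convergence to almost-sure convergence.

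For the upper inequality, revisit the proof of Lemma \ref{MercerLike}: with $\mu_l=(2+\epsilon)\sqrt{l\log l}$ the same computation in fact yields
\[
\P\bigl[\Delta(f_n)>(2+\epsilon)\sqrt{l\log l}\bigr]\leq 2\,l^{-\epsilon} = 2\cdot 2^{-n\epsilon}.
\]
Since $\sum_n 2^{-n\epsilon}<\infty$, Borel--Cantelli implies that, almost surely, $\Delta(f_n)/\sqrt{l\log l}\leq 2+\epsilon$ for all but finitely many $n$. Intersecting the resulting full-measure events over a countable sequence $\epsilon=1/k$, $k=1,2,\ldots$, produces $\limsup_n \Delta(f_n)/\sqrt{l\log l}\leq 2$ almost surely.

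For the lower inequality, the key ingredient is the exponential concentration of Corollary \ref{Corol}. Fix $\epsilon>0$ and apply it with $\theta=(\epsilon/2)\sqrt{l\log l}$:
\[
\P\Bigl[\bigl|\Delta(f_n)-\E[\Delta(f_n)]\bigr|\geq (\epsilon/2)\sqrt{l\log l}\Bigr]\leq 2\exp\!\Bigl(-\frac{\epsilon^2\log l}{32}\Bigr)=2\cdot 2^{-n\epsilon^2/32},
\]
which is again summable in $n$. Borel--Cantelli gives $|\Delta(f_n)-\E[\Delta(f_n)]|/\sqrt{l\log l}<\epsilon/2$ eventually, almost surely. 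Combining this with the deterministic limit $\E[\Delta(f_n)]/\sqrt{l\log l}\to 2$ from Theorem \ref{firstPart} (which ensures $\E[\Delta(f_n)]/\sqrt{l\log l}\geq 2-\epsilon/2$ for all large $n$) yields $\Delta(f_n)/\sqrt{l\log l}\geq 2-\epsilon$ almost surely for all large $n$. Intersecting over $\epsilon=1/k$ gives $\liminf_n\Delta(f_n)/\sqrt{l\log l}\geq 2$ almost surely, and together with the upper bound this proves the theorem.

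There is no serious analytic obstacle remaining: the delicate concentration and tail estimates are already packaged in Lemma \ref{MercerLike} (for the upper tail) and Corollary \ref{Corol} (for deviations from the mean). The one point that must be watched is summability in $n$: because $l=2^n$ turns every negative power $l^{-c}$ into a geometric sequence, both tail bounds are summable with room to spare, so Borel--Cantelli applies cleanly on the common probability space and delivers the almost-sure statement.
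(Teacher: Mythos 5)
Your argument is correct and follows essentially the same route as the paper: the paper likewise combines the concentration bound of Corollary \ref{Corol} (applied with $\theta=(\epsilon/2)\sqrt{l\log l}$, giving the summable bound $2l^{-\epsilon^2/32}$) with the deterministic convergence $\E[\Delta(f_n)]/\sqrt{l\log l}\to 2$ of Theorem \ref{firstPart}, and then invokes Borel--Cantelli on the common probability space. Your separate treatment of the upper tail via Lemma \ref{MercerLike} is harmless but redundant, since the concentration-plus-mean argument already controls deviations in both directions.
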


\begin{proof}
Let $f$ be a random element in $B$ and $f_n$ its restriction. We know from the proof of Theorem 1 that 
the inequality (\ref{triangularInequality}) is true.
By Theorem 2, the second term of the right hand side in (\ref{triangularInequality}) is null for $n$ sufficiently large. From Corollary \ref{Corol}, the first term is bounded by $2l^{-\frac{\epsilon^2}{32}} = 2^{\frac{-n\epsilon^2 + 32}{32}}$.  Therefore, for $\epsilon > 0$:

\[
\sum_{n = 1}^{\infty} \P\bigg[\Big|\frac{ \Delta(f_n)}{\sqrt{l \log l}} - 2 \Big| > \epsilon  \bigg] < \infty.
\]
Now applying the Borel-Cantelli Lemma gives the theorem.
\qed
\end{proof}

\section{The non-linearity vs the autocorrelation}

\subsection{Application: a test on  Boolean functions}

The article [CGM + 14] proposes as application of the absolute indicator a test on the Boolean functions to know if a random Boolean function (for example given by a True Random Number Generator) would be reliable (see the Introduction).
We will show here that autocorrelation can sometimes detect a function that cannot pass the test whereas with non-linearity it does.

Indeed a common weakness in random function generators is that a function can always loop, that is, it can be periodic.
We will see that a function with two periods can pass the test of non-linearity, but not that of autocorrelation.

\subsection{Two periods function}

Let a Boolean function on $\F_2^n$ be defined by
$$\begin{array}{rccl}
f: \F_2^n=  &\F_2^{n-1}\times\F_2 &\longrightarrow &\F_2 \cr
&(x,0)& \longmapsto &g (x) \cr
&(x,1)& \longmapsto &g (x).
\end{array}$$
where $g$ is  a Boolean function on $\F_2^{n-1}$.

\subsubsection{The nonlinearity} 

We want to compute the nonlinearity of $f$ based on that of $ g $.

\begin{proposition}
The nonlinearity of $f$ (computed in $\F_2^n$) is the double of the nonlinearity of  $ g $ (computed in $\F_2^{n-1}$).
\end{proposition}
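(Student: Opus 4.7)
The plan is to work directly from the Hamming-distance definition of nonlinearity and exploit the fact that $f$ ignores its last coordinate. Write a general affine function on $\F_2^n$ as $\ell_{a,b,c}(x,y) = a\cdot x + by + c$ with $a \in \F_2^{n-1}$ and $b,c \in \F_2$, and let $\ell_{a,c}^{g}(x) = a\cdot x + c$ denote the corresponding affine function on $\F_2^{n-1}$. The nonlinearity of $f$ is the minimum of $d(f, \ell_{a,b,c})$ over all choices of $(a,b,c)$.

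First I would split the Hamming distance according to the last coordinate:
\[
d(f,\ell_{a,b,c}) = \#\{x : g(x) \neq \ell_{a,c}^{g}(x)\} + \#\{x : g(x) \neq \ell_{a,b+c}^{g}(x)\} = d(g,\ell_{a,c}^{g}) + d(g,\ell_{a,b+c}^{g}).
\]
In the case $b=0$ this equals $2\,d(g,\ell_{a,c}^{g})$, so minimising over $(a,c)$ contributes exactly $2\,\mathrm{NL}(g)$. In the case $b=1$ the two affine functions $\ell_{a,c}^{g}$ and $\ell_{a,c+1}^{g}$ are complementary, so $d(g,\ell_{a,c}^{g}) + d(g,\ell_{a,c+1}^{g}) = 2^{n-1}$ for every choice of $(a,c)$.

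Combining the two cases, $\mathrm{NL}(f) = \min\bigl(2\,\mathrm{NL}(g),\, 2^{n-1}\bigr)$. To conclude, I would invoke the elementary upper bound $\mathrm{NL}(g) \leq 2^{n-2}$ on any Boolean function on $\F_2^{n-1}$ (which follows, e.g., from considering the distance to the two constant functions, one of which is at most $2^{n-2}$ away). This forces $2\,\mathrm{NL}(g) \leq 2^{n-1}$, so the minimum is achieved by the $b=0$ affine functions and equals $2\,\mathrm{NL}(g)$.

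The only subtle point is the $b=1$ case: one must notice that the affine function really depending on $y$ ``sees'' $g$ twice, on the two affine targets $\ell_{a,c}^{g}$ and its complement, and the sum of the two distances is forced to $2^{n-1}$ regardless of $g$. I expect this observation, together with verifying the inequality $2\,\mathrm{NL}(g) \leq 2^{n-1}$ so that the $b=1$ branch is never sharper, to be the main content; everything else is a direct computation from the definition of $f$.
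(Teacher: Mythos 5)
Your proof is correct, but it takes a genuinely different route from the paper. The paper works on the Fourier side: it uses $NL(f)=2^{n-1}-\frac{1}{2}\max_{u}|\widehat f(u)|$ and computes the Walsh coefficients of $f$ from those of $g$, namely $\widehat f(v,0)=2\widehat g(v)$ and $\widehat f(v,1)=0$, from which $NL(f)=2NL(g)$ drops out immediately. You instead argue directly with Hamming distances, splitting the affine functions on $\F_2^n$ according to whether they depend on the last coordinate: the ones with $b=0$ see $g$ twice and contribute $2\,d(g,\ell)$, while the ones with $b=1$ pair an affine function with its complement and are pinned at distance $2^{n-1}$, giving $NL(f)=\min\bigl(2NL(g),\,2^{n-1}\bigr)$. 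The extra step you need, $NL(g)\le 2^{n-2}$ via the two constant functions, is correct and exactly plays the role that is automatic in the Walsh picture (the vanishing coefficients $\widehat f(v,1)=0$ can never raise the maximum of the absolute values). Your case split is the distance-level mirror of the paper's computation; what it buys is a completely elementary, Fourier-free argument, at the cost of having to check explicitly that the $b=1$ branch is never the better one, a point you handle correctly.
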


\begin{proof}

Using the formulas
$$NL (f) = 2^{n-1} -1/2 \max_{u\in\F_2^n} |\widehat f (u)|$$
and
$$\widehat f (u) = \sum_{x\in\F_2^n} (-1)^{f (x)+x.u}$$
and by taking $u=(v, a)$ for $ v\in\F_2^{n-1}$ and $a\in\F_2 $,
it is an easy matter to show that
$\widehat f (v, 0)= 2\widehat g (v) \hbox{ , }\widehat f (v,1) = 0$.
Therefore $ NL (f) =2 NL (g).$
\qed\end{proof}

To perform the test we need to compare this nonlinearity to the one of random Boolean functions.
Using the calculation of Litsyn and Shpunt  \cite{Shpunt}, we see that  with a probability of $O(1/n^4)$  most of the function  with $n$ variables will lie in 
 $[lowNL_n,\  highNL_n]$ with
$lowNL_n=\scriptstyle 2^{n-1}-\sqrt{2^{n-1}(n\log 2 +3.5 \log(n\log 2) +0.125)}$ and $highNL_n=\scriptstyle 2^{n-1}-\sqrt{2^{n-1}(n\log2 -4.5 \log(n\log 2))}.$
  As we want to determine the pseudo-randomness of only short sequences (say $n<20$) we have
  $$2lowNL_{n-1}<  lowNL_{n}< 2highNL_{n-1} < highNL_{n} .$$
 This means that the segment supporting the nonlinearity of most functions with a  non-linearity close to $2NL(g)$ intersect  the segment supporting the nonlinearity of most random Boolean functions with $n$ variables.  So the function $f$ is likely to pass the test.

\subsubsection{The autocorrelation} 

\begin{proposition}
The autocorrelation of $f$ is $2^n$.
\end{proposition}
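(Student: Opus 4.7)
The plan is to exhibit a single nonzero shift $u$ that achieves the trivial upper bound $|\Delta_f(u)| \leq 2^n$; since $\Delta(f)$ is defined as a maximum over $u \in \F_2^n - \{0\}$, this forces $\Delta(f) = 2^n$.

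The natural candidate is the direction in which $f$ is periodic, namely $u = (0, 1) \in \F_2^{n-1} \times \F_2$, which is nonzero. For any $x = (y, b) \in \F_2^{n-1} \times \F_2$, we have $x + u = (y, b+1)$, and by the very definition of $f$,
\[
f(y, b) = g(y) = f(y, b+1).
\]
Hence $f(x) + f(x+u) = 0$ for every $x \in \F_2^n$, so $(-1)^{f(x)+f(x+u)} = 1$ identically, which gives
\[
\Delta_f(u) = \sum_{x \in \F_2^n} 1 = 2^n.
\]

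Combined with the trivial bound $|\Delta_f(w)| \leq 2^n$ for every $w$ (each summand is $\pm 1$ and there are $2^n$ of them), this shows $\Delta(f) = \max_{w \neq 0} |\Delta_f(w)| = 2^n$, which is the claim. There is really no obstacle here: the entire content of the proposition is that the existence of a period for $f$ is detected as a full-amplitude autocorrelation in the periodic direction, and the brief verification above is essentially the whole argument. The only thing worth emphasising in the write-up is the comparison with the previous subsection's regime $\Delta(f_n) \sim 2\sqrt{2^n \log 2^n}$ for random $f_n$, making clear that $2^n$ is exponentially larger and hence immediately flagged by the absolute-indicator test, in contrast with the nonlinearity test which $f$ still passes.
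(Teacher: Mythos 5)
Your proof is correct and follows essentially the same route as the paper: take the periodic shift $u=(\mathbf{0},1)$, observe that $f(x)=f(x+u)$ for all $x$ so that $\Delta_f(u)=2^n$, and conclude via the trivial bound $|\Delta_f(w)|\le 2^n$. The paper simply states this verification as straightforward, so your write-up is just a more explicit version of the same argument.
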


\begin{proof}
If we  take $u=(0,\dots,0,1)=({{\bf 0}},1)\in$ $\F_2^{n-1}\times\F_2$ it is straigthforward to check that 
$$\Delta(f)= \Delta_{f}(u)  = 2^n.$$
\qed
\end{proof}

This value is quite different from the value expected for the autocorrelation of most of the functions which is
$2\sqrt{2^n\log(2^n)}$. 
Hence the function $f$ will not pass the test with autocorrelation, as with nonlinearity it may.

\subsection{Disturbed two period function}
We want to stress the fact that this phenomenon still exist if the function is slightly disturbed (by a noise for instance).

Define inductively a sequence $(f_i)_{0\le i \le r}$ such that $f_0=f$, and $f_{i+1}$ is  is obtained from $f_i$ by  choosing a random value $i$ in the truth table of $f_i$ and changing the sign of $f_i(u_i)$.
So  the Boolean function $f_r$  is equal to $f$, except for a set $E$ of $r$ points.

\subsubsection{The autocorrelation}

\begin{proposition}
The autocorrelation of $f_r$ fulfills
$$\Delta(f_r)  \ge 2^{n} -4r$$
where $ r $ is the number of errors in $f_r$ with respect to $ f $.

\end{proposition}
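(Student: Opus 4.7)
The plan is to evaluate $\Delta_{f_r}$ at the very same vector $u=(\mathbf{0},1)\in\F_2^{n-1}\times\F_2$ that was used in the previous proposition to show $\Delta(f)=2^n$. For the unperturbed function $f$, every summand of $\Delta_f(u)=\sum_{x\in\F_2^n}(-1)^{f(x)+f(x+u)}$ equals $+1$, so the sum is $2^n$. The strategy is to bound the damage done by flipping $r$ values in the truth table and then use $\Delta(f_r)\ge|\Delta_{f_r}(u)|$.

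The key step is a one-point perturbation analysis. If $f'$ differs from $f$ only at a single point $y\in\F_2^n$, then in the sum $\sum_x(-1)^{f'(x)+f'(x+u)}$ only two summands can change, namely the one at $x=y$ and the one at $x=y+u$, each by at most $2$ in absolute value. Hence
\[
\bigl|\Delta_{f'}(u)-\Delta_{f}(u)\bigr|\le 4.
\]
Iterating this inequality along the inductive construction $f_0=f,f_1,\ldots,f_r$ (so each $f_{i+1}$ differs from $f_i$ at a single point $u_i$) yields by the triangle inequality
\[
\bigl|\Delta_{f_r}(u)-\Delta_{f}(u)\bigr|\le 4r,
\]
and therefore $\Delta_{f_r}(u)\ge 2^n-4r$. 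Taking the maximum over nonzero vectors preserves this lower bound, giving $\Delta(f_r)\ge 2^n-4r$ as claimed.

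There is no real obstacle here; the only point requiring a little care is to verify that a single flipped value truly touches only two summands (the case where $y$ and $y+u$ coincide is excluded since $u\neq 0$), and that simultaneous flips at $y$ and $y+u$ can only \emph{improve} the bound (both summands return to $+1$), so the crude one-at-a-time bookkeeping is never worse than the truth. Thus the $4r$ loss is an overestimate, which is exactly what is needed to obtain the stated inequality.
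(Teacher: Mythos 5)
Your proof is correct and takes essentially the same approach as the paper: both evaluate $\Delta_{f_r}$ at the witness $u=(\mathbf{0},1)\in\F_2^{n-1}\times\F_2$ and show that the $r$ perturbed values can lower the sum from $2^n$ by at most $4r$. The only difference is bookkeeping: you telescope one flip at a time (each flip changes $\Delta_{f_r}(u)$ by at most $4$), while the paper splits the sum directly into the unaffected terms (at least $2^n-2r$ of them, each equal to $+1$) and the at most $2r$ affected terms (each at least $-1$).
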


\begin{proof}
Let us take $u=(0,\dots,0,1)=({\bf 0},1)\in\F_2^{n-1}\times\F_2$. Then

\begin{eqnarray*}
\Delta_{f_r}({\bf 0},1)
&=&\sum_{x\in\F_2^n,x\notin E,x+u\notin E} (-1)^{f_r(x)+f_r(x+u)} +\sum_{x\in\F_2^n,x\in E\cup (E+u)} (-1)^{f_r(x)+f_r(x+u)}  \\
&\ge&\sum_{x\in\F_2^n,x\notin E,x+u\notin E} (-1)^{f(x)+f(x+u)} -2r \\
&\ge&2^{n} -4r 
\end{eqnarray*}
because $f_r(x+u)=f_r(x)$ if $x\in\F_2^n,x\notin E,x+u\notin E$.
Therefore
$$\Delta(f_r) = \sup_{u} \Delta_{f_r}(u) \ge  \Delta_{f_r}({\bf 0},1) \ge 2^{n} -4r.$$
\qed
\end{proof}

\subsubsection{The nonlinerarity} 

\begin{proposition}
The nonlinearity of $f_r$ is such that
$$ NL(f)-r \le NL(f_r)\le   NL(f)+r. $$
\end{proposition}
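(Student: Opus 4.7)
\medskip

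\noindent\textbf{Proof plan.} The plan is to reduce the statement to a triangle inequality for the Hamming distance, using only the definition of nonlinearity as the Hamming distance from $f$ to the set of affine functions.

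First I would observe that by construction $f_r$ is obtained from $f$ by flipping its value at some subset $E \subset \F_2^n$ of size at most $r$ (at most, rather than exactly, because successive random flips might land on the same coordinate and cancel). Consequently the Hamming distance satisfies $d(f,f_r) = |E| \le r$.

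Next I would denote by $\mathcal{A}$ the set of affine functions on $\F_2^n$, so that by definition
\[
NL(f) = \min_{a \in \mathcal{A}} d(f,a), \qquad NL(f_r) = \min_{a \in \mathcal{A}} d(f_r,a).
\]
For the upper bound, pick $a^\ast \in \mathcal{A}$ realising $NL(f) = d(f,a^\ast)$; then the triangle inequality for Hamming distance gives
\[
NL(f_r) \le d(f_r,a^\ast) \le d(f_r,f) + d(f,a^\ast) \le r + NL(f).
\]
For the lower bound, pick $b^\ast \in \mathcal{A}$ realising $NL(f_r) = d(f_r,b^\ast)$; then
\[
NL(f) \le d(f,b^\ast) \le d(f,f_r) + d(f_r,b^\ast) \le r + NL(f_r),
\]
which rearranges to $NL(f_r) \ge NL(f) - r$.

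There is essentially no obstacle: the whole argument is the Lipschitz continuity of the function $g \mapsto \min_{a \in \mathcal{A}} d(g,a)$ with respect to Hamming distance, with Lipschitz constant $1$. The only point that deserves a line of justification is the inductive verification that the symmetric difference between the truth tables of $f$ and $f_r$ has cardinality at most $r$, which follows immediately from the fact that a single sign flip changes the truth table in exactly one coordinate.
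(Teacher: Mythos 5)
Your proof is correct and is essentially the paper's own argument: the paper likewise deduces both bounds from the fact that $f$ and $f_r$ differ in at most $r$ truth-table entries, i.e.\ the $1$-Lipschitz property of $NL$ with respect to Hamming distance, which you have merely written out explicitly via the triangle inequality. No further comment is needed.
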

\begin{proof}
The nonlinearity of a Boolean function is the minimum of the number of bits that you need  to change in the truth table of this function to get an affine function.
To get $f_r$ from $f$  you have to change at most $r$ bits. So
$NL(f) \le NL(f_r)+r.$
To get $f$ from $f_r$  you have to change at most $r$ bits. So
$NL(f_r) \le NL(f)+r.$
The conclusion follows.
\qed
\end{proof}

One can even get a more precise estimate of the nonlinearity.
 \begin{proposition}
The following inequality is true:
$$P(|NL(f_r) )-  NL(f)| > s )  \le 2 e^{-s^2/2r}.$$
\end{proposition}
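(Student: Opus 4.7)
The natural approach is to view $NL(f_r)$ as a function of the $r$ independent random choices that define the perturbation, and then invoke the bounded-differences inequality of Lemma~\ref{McDiarmidInq}. Concretely, I would take the underlying randomness to be the i.i.d.\ uniform positions $u_0,\dots,u_{r-1}\in\F_2^n$ at which the successive sign flips occur, so that $NL(f_r)=g(u_0,\dots,u_{r-1})$ for some measurable function $g\colon(\F_2^n)^r\to\R$.

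The first real step is to verify the Lipschitz property. If one coordinate $u_i$ is replaced by another value $u_i'$ while all other $u_j$ are left unchanged, one checks that the resulting perturbed function differs from $f_r$ at most at the two positions $u_i$ and $u_i'$: the contributions of all later flips $u_{i+1},\dots,u_{r-1}$ act identically in the two processes and cancel out, so the only possible net difference sits at those two sites. Applying the preceding proposition to a pair of Boolean functions differing at two positions gives $|NL(h)-NL(h')|\le 2$, whence $|g(x)-g(y)|\le 2$ whenever $x$ and $y$ differ in a single coordinate. Lemma~\ref{McDiarmidInq} with $l=r$ and $c=2$ then yields
\[
\P\bigl[\,|NL(f_r) - \E[NL(f_r)]|\ge s\,\bigr]\;\le\;2\exp\!\Bigl(-\tfrac{2s^2}{4r}\Bigr)\;=\;2e^{-s^2/(2r)}.
\]
This is precisely the concentration asserted, except that it is centered at $\E[NL(f_r)]$ rather than at $NL(f)$.

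The remaining task is to identify these two centers. The preceding proposition already gives $|NL(f_r)-NL(f)|\le r$ pointwise, hence in particular $|\E[NL(f_r)]-NL(f)|\le r$, so the inequality stated in the proposition holds verbatim up to a shift of at most $r$ in the deviation parameter. I expect this identification of centers to be the only subtlety: for $s$ of order $\sqrt{r}$ or larger the McDiarmid bound is informative and tight, while for $s\le r$ the deterministic bound of the previous proposition already subsumes the claim. Apart from this bookkeeping, the argument is a single application of the martingale concentration inequality already deployed earlier in the paper.
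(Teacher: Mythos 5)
Your bounded-differences setup is sound as far as it goes: taking the flip positions $u_0,\dots,u_{r-1}$ as the independent coordinates, a change in one coordinate alters the final function in at most two points, so the preceding proposition gives $c=2$ and Lemma~\ref{McDiarmidInq} yields $\P\big[\,|NL(f_r)-\E[NL(f_r)]|\ge s\,\big]\le 2e^{-s^2/(2r)}$. The genuine gap is the recentering. The proposition is stated about $NL(f)$, and your only control on the centre is $|\E[NL(f_r)]-NL(f)|\le r$; a shift of up to $r$ is fatal exactly in the range where the claim has content. Since $|NL(f_r)-NL(f)|\le r$ holds pointwise, the probability is zero for $s\ge r$ and there is nothing to prove there; the nontrivial range is $s$ between roughly $\sqrt{r}$ and $r$, and your remark that for $s\le r$ ``the deterministic bound of the previous proposition already subsumes the claim'' is backwards --- the deterministic bound only disposes of $s\ge r$, while for $s<r$ your inequality, after absorbing the shift of the centre, no longer gives the stated bound. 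To close the gap you would need the drift $\E[NL(f_r)]-NL(f)$ to be negligible compared with $s$ in that range, which you have not shown (and which can fail in general: when $NL(f)$ is very small, each flip tends to increase the nonlinearity, so the drift can be of order $r$).

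The paper proceeds differently: it sets $X_i=NL(f_i)-NL(f_{i-1})$, so that $NL(f_r)-NL(f)=\sum_{i=1}^{r}X_i$ with $|X_i|\le 1$, declares the increments independent, and applies a Chernoff--Hoeffding bound directly to this telescoping sum; this centres the estimate at $NL(f)$ itself rather than at an expectation, which is precisely what your argument lacks. (The paper's own proof is terse --- it tacitly treats the increments as centred --- but that is the route by which it reaches the stated inequality.) If you wish to keep your McDiarmid approach, you must additionally bound the drift of $\E[NL(f_r)]$ away from $NL(f)$, or else switch to a concentration bound on the increments $X_i$ (Hoeffding or Azuma along the sequence $NL(f_0),\dots,NL(f_r)$), which again forces you to address their conditional means.
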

\begin{proof}
Let we define a random variable $X_i$ for $1\le i\le r$ by 
$X_i=NL(f_{i})-NL(f_{i-1})$ so that $NL(f_r) -  NL(f) =\sum_1^r X_i$.
These perturbation are independent, so   Chernoff bound gives the following result:
$$P\bigg[\Big|\sum_1^r X_i\Big|>s \bigg] < 2 e^{-s^2/2r}.$$
Whence the result.
\qed
\end{proof}


This proposition shows that the nonlinearity of $f_r$ does not   deviate too much from the nonlinearity of $f$. Hence again a function slightly disturbed will pass the test for nonlinearity, as it will not for autocorrelation.

\section{Conclusion}

We proved that the absolute indicator of most of the Boolean functions is close to a small value. Thus we draw the  following conclusions: 
\begin{itemize}
\item One should not consider the absolute indicator of a Boolean function as a primary criterion in the design of symmetric cryptographical primitives but focus on other properties relevant to his desired application. The attention should be also given to, for example, simplicity of the algebraic expression (to allow easy bitsliced or efficient hardware implementation) or nonlinearity. Once a Boolean function is selected, the designer should only verify that the absolute indicator is not too far from the expected value.
\item Regarding the application of the absolute indicator proposed in \cite{SEME}, one can say that this test would have an clearly favour type I errors against type II errors, i.e., a string not passing the test is certainly not random while we cannot guaranty that a string passing the test is ``truly'' random. 
\item As an example of this test, we have shown that  some short binary sequence would be detected to be non random with the absolute indicator  while it would pass the test with nonlinearity.

\end{itemize}




\end{document}